\newtheorem{theorem}{Theorem}
\newtheorem{lemma}[theorem]{Lemma}
\theoremstyle{definition}
\newtheorem{definition}{Definition}
\let\inf\undef
\DeclareMathOperator*{\inf}{\vphantom{p}inf}
\let\sup\undef
\DeclareMathOperator*{\sup}{\vphantom{p}sup}
\newcommand{\mrm}[1]{\mathrm{#1}}
\newcommand{\p}{\ensuremath{\mathbf{p}}}
\newcommand{\img}{{\sf Img}}
\newcommand{\pred}{\widehat{y}}
\newcommand{\argmin}[1]{\underset{#1}{\mrm{argmin}} \ }
\newcommand{\reals}{\mathbb{R}}
\newcommand{\En}{\mathbb{E}}  
\newcommand{\inner}[1]{\left\langle #1 \right\rangle}
\newcommand{\ind}[1]{{\bf 1}\left\{#1\right\}}
\newcommand{\tr}{\ensuremath{{\scriptscriptstyle\mathsf{T}}}}
\newcommand\e{\mathbf{e}}
\newcommand\s{\mathbf{s}}
\newcommand\w{\mathbf{w}}
\newcommand\x{\mathbf{x}}
\newcommand\y{\mathbf{y}}
\newcommand\z{\mathbf{z}}
\renewcommand\v{\mathbf{v}}
\newcommand\cD{\mathcal{D}}
\newcommand\K{\mathcal{K}}
\newcommand\cN{\mathcal{N}}
\newcommand\X{\mathcal{X}}
\newcommand\Y{\mathcal{Y}}
\newcommand\Z{\mathcal{Z}}
\newcommand\F{\mathcal{F}}
\newcommand\G{\mathcal{G}}
\newcommand\ldim{\mathrm{Ldim}}
\newcommand\fat{\mathrm{fat}}
\newcommand\loss{\boldsymbol{\ell}}
\def\deq{\triangleq}
\newcommand{\bmu}{\ensuremath{\boldsymbol{\mu}}}
\newcommand{\thresh}[1]{\tau_{\delta}(#1)}
\begin{document}

\title{Sequential Probability Assignment with Binary Alphabets and Large Classes of Experts}

\author{Alexander Rakhlin\\ University of Pennsylvania
\and
Karthik Sridharan\\ Cornell University
}

\maketitle

\begin{abstract}
	We analyze the problem of sequential probability assignment for binary outcomes with side information and logarithmic loss, where regret---or, redundancy---is measured with respect to a (possibly infinite) class of experts. We provide upper and lower bounds for minimax regret in terms of sequential complexities of the class, introduced in \cite{RakSriTew14ptrf,RakSriTew10}. These complexities were recently shown to give matching (up to logarithmic factors) upper and lower bounds for sequential prediction with general convex Lipschitz loss functions \cite{RakSri14a,RakSri15nonparametric}.
To deal with unbounded gradients of the logarithmic loss, we present a new analysis  that employs a sequential chaining technique with a Bernstein-type bound. 
The introduced complexities are intrinsic to the problem of sequential probability assignment, as illustrated by our lower bound.
	
We also consider an example of a large class of experts parametrized by vectors in a high-dimensional Euclidean ball (or a Hilbert ball). The typical discretization approach fails, while our techniques give a non-trivial bound. For this problem we also present an algorithm based on regularization with a self-concordant barrier. This algorithm is of an independent interest, as it requires a bound on the function values rather than gradients.
\end{abstract}

\section{Introduction}
	
	In this paper we study the problem of sequential prediction of a string of bits $ (y_1,\ldots,y_n) \deq y_{1:n} \in\{0,1\}^n$. At each round $t=1,\ldots,n$, the forecaster  observes side information $x_t\in\X_t$, decides on the probability $\pred_t\in[0,1]$ of the event $y_t=1$, observes the outcome $y_t\in\{0,1\}$, and pays according to the logarithmic (or, \emph{self-information}) loss function
	$$\loss(\pred_t,y_t) = -\ind{y=1}\log\pred_t - \ind{y_t=0}\log(1-\pred_t).$$
At each time instance $t$, the side-information set $\X_t$ is a subset of an abstract set $\X$. The subset $\X_t$ is allowed to depend on the history $h_{1:t-1}\deq (x_{1:t-1},y_{1:t-1})$, and the functions $\X_t:(\X\times\Y)^{t-1}\to 2^\X$ are assumed to be known to the forecaster.

The goal of the forecaster is to predict as well as a benchmark set $\F$ of functions---sometimes called ``experts''---mapping $\X$ to $[0,1]$.  More specifically, the goal is to keep \emph{regret}
	$$\sum_{t=1}^n \loss(\pred_t,y_t) - \inf_{f\in\F}\sum_{t=1}^n \loss(f(x_t),y_t)$$
	as small as possible for all sequences $y_1,\ldots,y_n$ and $x_1,\ldots,x_n$ (satisfying $x_t\in\X_t(h_{1:t-1})$).

To illustrate the setting, consider a few examples. We may take $\X_t(h_{1:t-1})=\left\{(y_1,\ldots,y_{t-1})\right\} \subset  \{0,1\}^{t-1}$ to be a singleton set containing the exact realization of the sequence so far. In this case, the choice $x_t=(y_1,\ldots,y_{t-1})$ is enforced and $f(x_t)=p_f(1|y_1,\ldots,y_{t-1})$ may be viewed as a conditional distribution; the normalized maximum likelihood forecaster is known to be minimax optimal in this extensively studied scenario (e.g. \cite[Ch. 9]{PLG}). Alternatively, we may define $\X_t(h_{1:t-1}) = \left\{y'\in \{0,1\}^{t-1}: d_{H}(y_{1:t-1},y')\leq r\right\}$ to be a set that contains histories with up to $r$ flips of the bits. In this case, the forecaster is facing a situation where history can be slightly altered in an adversarial fashion. As another example, we may take $\X_t(h_{1:t-1}) = \left\{(y_{t-k},\ldots,y_{t-1})\right\}$, in which case the forecaster competes with a set of $k$th-order stationary Markov experts. The set $\X_t$ may also be time-invariant, in which case $f$ is a memoryless expert that acts on side information. In short, the formulation we presented subsumes a wide range of interesting problems. Our goal in this paper is to understand how ``complexity'' of $\F$ affects minimax rates of regret.

The minimax regret for the problem of sequential probability assignment can be written as
	\begin{align}
		\label{eq:unrolled_minimax_with_swap}
		V_n(\F) &= \multiminimax{\sup_{x_t\in\X_t(x_{1:t-1},y_{1:t-1})}~~ \inf_{\pred_t\in[0,1]}~~ \sup_{p_t\in [0,1]}~~ \En_{y_t \sim p_t}}_{t=1}^n \left\{\sum_{t=1}^n \loss(\pred_t,y_t) - \inf_{f\in\F}\sum_{t=1}^n \loss(f(x_t),y_t)  \right\} 
	\end{align}
	where $\En_{y_t\sim p_t}$ is a shorthand for the expectation with respect to Bernoulli $y_t$ with bias $p_t$. Following \cite{StatNotes2012}, the notation $\multiminimax{\ldots}_{t=1}^n$ represents a repeated application of the operators inside the brackets and corresponds to the unrolled minimax value of the associated game between the forecaster and Nature. Any upper bound on $V_n(\F)$ guarantees existence of a strategy that attains regret of at most that amount. In the last few years, new techniques with roots in empirical process theory have emerged for analyzing minimax values of the form \eqref{eq:unrolled_minimax_with_swap}. We bring these techniques to bear on the problem of sequential probability assignment with self-information loss.

	Our point of comparison will be the study of rich classes in \cite[Section 9.10]{PLG}. Following  \cite{PLG}, we employ the truncation method to deal with the unbounded loss function. To this end, fix $\delta\in(0,1/2)$, to be chosen later. For $a\in[0,1]$, let $\thresh{a}$ denote the thresholded value
	\begin{align*}
		\thresh{a} ~=~ \begin{cases} 
		\delta &\mbox{if } a<\delta \\ 
		a &\mbox{if } a\in[\delta,1-\delta] \\
		1-\delta &\mbox{if } a>1-\delta.
	\end{cases}
	\end{align*}
	For a class $\F$, let $\F^\delta = \{\thresh{f}:f\in\F\}$ denote the class of truncated functions. It is easy to check (see \cite[Lemma 9.5]{PLG}) that 
	\begin{align}
		\label{eq:off_delta}
		V_n(\F)\leq V_n(\F^\delta)+2n\delta,
	\end{align}
	and we can, therefore, focus on the minimax regret with respect to $\F^\delta$. We show that $V_n(\F^\delta)$ can be upper bounded via a modified (offset) sequential Rademacher complexity, which in turn can be controlled via sequential chaining in the spirit of \cite{RakSri14a,RakSri15nonparametric}. Unlike the latter two papers, however, we do not employ symmetrization and instead use the self-information property of the loss function. We are able to mitigate the adverse dependence of $V_n(\F^\delta)$ on $\delta$ by introducing  chaining with Bernstein-style terms that control the sub-Gaussian and sub-exponential tail behaviors. As an example, we recover the $n^{3/5}$ rate for monotonically increasing experts presented in \cite[Sec 9.10-9.11]{PLG}. However, our technique goes well beyond such examples of ``static'' experts. In particular, we can obtain non-trivial rates even in the setting where discretization in the style of \cite[Sec 9.10-9.11]{PLG},\cite{cesa1999minimax} leads to vacuous bounds. One such example is when experts are indexed by a unit ball in a Hilbert space (or, a high-dimensional Euclidean space) and expert's prediction depends linearly on side information. A discretization in the supremum norm of this set of experts is not finite, and thus the typical approaches to this problem fail. In contrast, we employ the ideas from empirical process theory and its sequential generalization in \cite{RakSriTew14ptrf} in order to define ``data-dependent'' notions of complexity.

	Despite the improvement over the technique of \cite{PLG}, the rates attained in this paper are not always minimax optimal, as we demonstrate in Section~\ref{sec:linear}. This is in contrast to other loss functions (such as absolute, square, $q$-power, and logistic) for which matching upper and lower bounds (to within logarithmic factors) have been established recently in \cite{RakSri15nonparametric}. As mentioned in \cite{PLG}, the truncation method is crude, and we leave it as an open question whether a different technique can be employed to attain optimal rates. 

		We finish this introduction with a brief mention that sequential probability assignment is extensively studied in Information Theory, where regret is known as \emph{redundancy} with respect to a set of codes. The vast literature mostly investigates the case of parametric classes (see \cite{shtar1987universal,xie2000asymptotic,freund1996predicting,rissanen1986complexity,rissanen1996fisher} and the references in \cite[Ch. 9]{PLG}), with exact constants available in certain cases. We refer to \cite{mf-up-98} for a discussion of approaches to dealing with large comparator classes. Given the well-known connection to compression, it would be interesting to employ the relaxation-based algorithmic recipe of \cite{rakhlin2012relax,RakSri15nonparametric,StatNotes2012} to come up with novel data compression methods.

\section{Complexity of Large Classes of Experts}
	
	We focus on the minimax value for the thresholded class $\F^\delta$. To state the first technical lemma, we need the definition of a tree. For an abstract set $\Z$, a $\Z$-valued complete binary tree $\z$ of depth $n$ is a collection of labeling functions $\z_t:\{0,1\}^{t-1}\to \Z$ for $t\in\{1,\ldots,n\}$. For a sequence $y=(y_1,\ldots,y_n)\in\{0,1\}^n$ (which we call \emph{a path}), we write $\z_t(y)$ for $\z_t(y_1,\ldots,y_{t-1})$. Once we take $y_1,\ldots,y_n$ to be  random variables, we may view $\{\z_t\}$ as a predictable process with respect to the filtration given by $\sigma(y_1,\ldots,y_{t-1})$. \footnote{We remark that in \cite{RakSriTew14ptrf,RakSriTew10,StatNotes2012}, the trees are defined with respect to $\{\pm1\}$-valued sequences, whereas here we use the $\{0,1\}$-valued variables. The change is purely notational and all the definitions and results can be rephrased appropriately.}
	
	We will say that an $\x$ tree is \emph{consistent} with respect to the side information set mappings $h_{1:t-1}\mapsto \X_t(h_{1:t-1})$ if for any $y\in\{0,1\}^n$, it holds that for all $t$,
	$$\x_t(y) \in \X_t(\x_1(y),\ldots,\x_{t-1}(y), y_1,\ldots,y_{t-1}).$$
	A consistent tree respects the sets of constraints $\X_t$ imposed by the problem. For the purposes of analyzing complexity of $\F$, it is important that the constraints are reflected in the tree $\x$.

Theorem~\ref{thm:value_upper} below relates the minimax regret with respect to $\F^\delta$ to the supremum of a stochastic process of a form similar to \emph{offset Rademacher complexity} introduced in \cite{RakSri14a}. The key difference with respect to \cite{RakSri14a} is that the stochastic process is defined with potentially biased coin flips. To prove Theorem~\ref{thm:value_upper}, we avoid symmetrization and instead exploit the fact that the logarithmic loss has the self-information property: in the maximin dual, the optimal probability assignment is given precisely by the distribution of the $y_t$ variable. We note that the symmetrization approach of \cite{RakSri14a} appears to give worse rates for the logarithmic loss function. 

	Let 
	\begin{align}
		\label{def:eta}
		\eta(p,a) \deq - \ind{a=1}p^{-1}  + \ind{a=0} (1 - p)^{-1}
	\end{align} 
	and observe that $\eta$ is zero-mean if $a$ is Bernoulli random variable with bias $p$. 	
	
	\begin{theorem}
		\label{thm:value_upper}
		The following upper bound holds:
		\begin{align*}
		V_n(\F^\delta) \leq \sup_{\x, \bmu, \p}\En \sup_{f \in \F^\delta} \left[ \sum_{t: {\p_t(y)\in[\delta,1-\delta]}} \eta(\p_t(y),y_t) \left(\bmu_t(y) - f(\x_t(y))\right)  - \frac{1}{2}\left(\bmu_t(y) - f(\x_t(y))\right)^2\right] + 2 n \delta \log(1/\delta),
		\end{align*}
		where $\p,\bmu$ range over all $[0,1]$-valued trees, $\x$ ranges over consistent trees, and the stochastic process $y_1,\ldots,y_n$ is defined via $y_t|y_1,\ldots,y_{t-1} \sim \text{\sf Bernoulli}(\p_t(y_1,\ldots,y_{t-1}))$. 
	\end{theorem}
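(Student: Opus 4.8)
The plan is to run the standard minimax reduction (cf.~\cite{RakSriTew14ptrf,StatNotes2012,RakSri14a}): interchange the learner's and Nature's moves round by round, encode Nature's strategies as trees, and---instead of the symmetrization step of \cite{RakSri14a}---use the self-information structure of $\loss$ through a second-order expansion to create the quadratic offset.

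\textbf{Swap and trees.} Starting from \eqref{eq:unrolled_minimax_with_swap} with $\F^\delta$ in place of $\F$, first observe that shrinking the learner's move set from $[0,1]$ to $[\delta,1-\delta]$ can only increase the value, so it suffices to bound that value; on $[\delta,1-\delta]\times[0,1]$ the per-round payoff $(\pred_t,p_t)\mapsto\En_{y_t\sim p_t}\loss(\pred_t,y_t)$, together with the ($\pred_t$-independent) continuation, is bounded, convex in $\pred_t$, and affine in $p_t$, so Sion's minimax theorem swaps $\inf_{\pred_t}$ with $\sup_{p_t}$ at every round. After the swap, Nature's history-dependent choices of side information and of the Bernoulli biases are precisely a consistent $\x$-tree and an arbitrary $[0,1]$-valued $\p$-tree, the outcomes satisfy $y_t\mid y_{1:t-1}\sim\mathrm{Bernoulli}(\p_t(y_{1:t-1}))$, and the learner may now base $\pred_t$ on $y_{1:t-1}$ and on $\p_t$. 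Plugging in the particular strategy $\pred_t=\thresh{\p_t(y)}$ and writing the regret as $\sup_{f\in\F^\delta}\sum_t\bigl(\loss(\pred_t,y_t)-\loss(f(\x_t(y)),y_t)\bigr)$ gives
\begin{align*}
V_n(\F^\delta)\ \le\ \sup_{\x,\p}\ \En\ \sup_{f\in\F^\delta}\ \sum_{t=1}^n\Bigl(\loss\bigl(\thresh{\p_t(y)},y_t\bigr)-\loss\bigl(f(\x_t(y)),y_t\bigr)\Bigr).
\end{align*}

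\textbf{Truncated rounds, then a second-order expansion.} Along each path, split the rounds into those with $\p_t(y)\in[\delta,1-\delta]$ and the rest. On a ``bad'' round, drop the nonnegative term $\loss(f(\x_t(y)),y_t)$; what remains, $\loss(\thresh{\p_t(y)},y_t)$, does not depend on $f$ and is conditionally deterministic given $y_{1:t-1}$, and a direct computation (using $\p_t(y)\notin[\delta,1-\delta]$ together with $-\log(1-\delta)\le2\delta$) gives $\En\bigl[\loss(\thresh{\p_t(y)},y_t)\mid y_{1:t-1}\bigr]\le2\delta\log(1/\delta)$, so---after pulling these terms out of $\sup_{f\in\F^\delta}$ and using the tower rule---the bad rounds contribute at most $2n\delta\log(1/\delta)$. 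On a ``good'' round, $\thresh{\p_t(y)}=\p_t(y)$; Taylor-expand $q\mapsto\loss(q,y_t)$ about $q=\p_t(y)$, noting $\partial_q\loss(q,y_t)\big|_{q=\p_t(y)}=-\ind{y_t=1}\p_t(y)^{-1}+\ind{y_t=0}(1-\p_t(y))^{-1}=\eta(\p_t(y),y_t)$ and $\partial_q^2\loss(q,y_t)=\ind{y_t=1}q^{-2}+\ind{y_t=0}(1-q)^{-2}\ge1$ for $q\in(0,1)$. Since $f\in\F^\delta$ forces $f(\x_t(y))\in[\delta,1-\delta]\subset(0,1)$, Taylor's theorem with Lagrange remainder yields, for every such $f$,
\begin{align*}
\loss\bigl(\p_t(y),y_t\bigr)-\loss\bigl(f(\x_t(y)),y_t\bigr)\ \le\ \eta\bigl(\p_t(y),y_t\bigr)\bigl(\p_t(y)-f(\x_t(y))\bigr)-\tfrac12\bigl(\p_t(y)-f(\x_t(y))\bigr)^2.
\end{align*}
Summing these good-round bounds inside the supremum over $f$, adding $2n\delta\log(1/\delta)$ for the bad rounds, and noting that $\bmu_t(y)=\p_t(y)$ is admissible in the outer supremum over $[0,1]$-valued $\bmu$-trees (so the bound only grows when we also take $\sup_{\bmu}$) gives exactly the claimed inequality.

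\textbf{Main obstacle.} The expansion is easy once one notices that $\partial_q\loss$ at $q=\p_t(y)$ is the conditionally zero-mean quantity $\eta(\p_t(y),y_t)$ and that $\partial_q^2\loss\ge1$ uniformly (this is what makes the offset constant $\tfrac12$ attainable). The genuine care lies in (i) making the round-by-round minimax interchange legitimate for the unbounded logarithmic loss---resolved by the harmless restriction of the learner to $[\delta,1-\delta]$---and (ii) arranging the learner's boundary behavior (the truncation $\thresh{\cdot}$, matched to $\F^\delta$) so that exactly the rounds on which expanding about $\p_t(y)$ is useless are the ones absorbed into the $O(n\delta\log(1/\delta))$ slack, while ensuring that this slack can be extracted from under $\sup_{f\in\F^\delta}$, which is why one discards $\loss(f(\x_t(y)),y_t)\ge0$ before taking conditional expectations on those rounds.
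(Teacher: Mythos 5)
Your argument is essentially the paper's own proof: restrict the forecaster to $[\delta,1-\delta]$, swap the minimax order round by round, play $\thresh{p_t}$, absorb the rounds with $p_t\notin[\delta,1-\delta]$ into the $2n\delta\log(1/\delta)$ slack, use strong convexity ($\loss''\ge 1$) on the remaining rounds to produce the $\eta$-linear term with the $\tfrac12$-quadratic offset, and finally pass to trees while decoupling $\bmu$ from $\p$. The one quibble is on the bad rounds: dropping $\loss(f(\x_t(y)),y_t)$ entirely leaves $\En[\loss(\thresh{\p_t(y)},y_t)\mid y_{1:t-1}]\le \delta\log(1/\delta)-\log(1-\delta)$, which can exceed $2\delta\log(1/\delta)$ for $\delta$ close to $1/2$ (the paper avoids this by retaining $\loss(f(x_t),y_t)$ in the branch where the difference is non-positive, e.g.\ $y_t=0$ when $p_t<\delta$), but this only affects constants in a regime of $\delta$ that is never used.
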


To shorten the notation in Theorem~\ref{thm:value_upper}, let $\Z = \X \times [0,1]$ and for every $f \in \F^\delta$, write $g_f(z) = g_f(x,a) = a - f(x)$. The upper bound of Theorem~\ref{thm:value_upper} can be written more succinctly as 
\begin{align}
	\label{eq:succinct_form}
	V_n(\F^\delta) \le \sup_{\z, \p}\En \sup_{f \in \F^\delta} \left[ \sum_{t: {\p_t(y)\in[\delta,1-\delta]}} \eta(\p_t(y),y_t) g_f(\z_t(y))  - \frac{1}{2}g_f(\z_t(y))^2 \right] + 2 n \delta \log(1/\delta).
\end{align}
We keep in mind that the $\x$ part of $\z$ is a consistent tree. Observe that the expression above is a supremum of a collection of random variables indexed by $f\in\F^\delta$, each with a nonpositive-mean. To analyze the supremum of this stochastic process, we first consider the case when the indexing set is finite.
\begin{lemma}
	\label{eq:finite_biased}
	For any set $V$ consisting of $[-1,1]$ valued trees, any $[\delta,1-\delta]$-valued tree $\p$, and any $c>0$,
\begin{align*}
\En_y \max_{\v \in V} \left[ \sum_{t=1}^n   \eta(\p_t(y),y_t) \v_t(y)  - c\v_t(y)^2 \right] \leq \frac{\log\ |V|}{\delta \log(1 + \frac{c}{2})} 
\end{align*}
where $y_t|y_1,\ldots,y_{t-1} \sim \text{\sf Bernoulli}(\p_t(y_1,\ldots,y_{t-1}))$. Furthermore, the same upper bound holds if $\p$ is any $[0,1]$-valued tree but the summation is restricted to $\{t: \p_t(y)\in[\delta,1-\delta]\}$.
\end{lemma}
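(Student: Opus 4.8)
The plan is to combine the exponential-weights (aggregation / soft-max) method with a one-step moment-generating-function (MGF) estimate for the increments $\eta(\p_t,y_t)\v_t-c\v_t^2$. Although the ``gradient'' $\eta(p,\cdot)$ is only bounded by $1/\delta$ — so the increments are merely sub-exponential, not sub-Gaussian — the quadratic penalty $-c\v_t^2$ lets one control their conditional MGF at the scale $\lambda=\delta\log(1+\tfrac c2)$, and this scale is exactly what produces the stated bound.

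Concretely, write $L_t(\v)\deq\sum_{s=1}^t\big(\eta(\p_s(y),y_s)\v_s(y)-c\v_s(y)^2\big)$. From $\max_{\v\in V}e^{\lambda L_n(\v)}\le\sum_{\v\in V}e^{\lambda L_n(\v)}$ and Jensen's inequality,
\[
\En_y\max_{\v\in V}L_n(\v)\ \le\ \frac{\log|V|}{\lambda}\ +\ \frac1\lambda\log\max_{\v\in V}\En_y e^{\lambda L_n(\v)},
\]
so the lemma reduces to showing $\En_y e^{\lambda L_n(\v)}\le1$ for each fixed tree $\v$, with $\lambda=\delta\log(1+\tfrac c2)$. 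I would prove this by backward induction on the coordinates: conditioning on $y_{1:t-1}$ fixes the prefix $L_{t-1}(\v)$ and the scalars $p\deq\p_t(y)\in[\delta,1-\delta]$, $v\deq\v_t(y)\in[-1,1]$, and $\En[e^{\lambda L_t(\v)}\mid y_{1:t-1}]=e^{\lambda L_{t-1}(\v)}\,\En_{y_t\sim\mathrm{Bernoulli}(p)}[e^{\lambda(\eta(p,y_t)v-cv^2)}]$. So everything comes down to the single-variable inequality $\En_{y\sim\mathrm{Bernoulli}(p)}[e^{\lambda(\eta(p,y)v-cv^2)}]\le1$ for all $p\in[\delta,1-\delta]$, $v\in[-1,1]$.

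This last inequality is the crux. The variable $W\deq\eta(p,y)v$ is mean zero, is bounded \emph{from above} by $|v|/q$ with $q\deq\min(p,1-p)\ge\delta$, and has $\En W^2=v^2/\big(p(1-p)\big)=v^2/\big(q(1-q)\big)$, hence $\En W^2\big/(|v|/q)^2=q/(1-q)$. The classical one-sided Bennett/Bernstein MGF bound then gives $\En e^{\lambda W}\le\exp\!\big(\tfrac{q}{1-q}(e^{\lambda|v|/q}-1-\lambda|v|/q)\big)$. Setting $r\deq|v|/q\in[0,1/\delta]$, using $\tfrac1{1-q}\le2$, $q\ge\delta$, and the fact that $u\mapsto(e^u-1-u)/u^2$ is nondecreasing on $[0,\infty)$ (so its value at $\lambda r$ is at most at $\lambda/\delta$), the bound $\En e^{\lambda W}\le e^{\lambda cv^2}$ follows once $e^{\mu}\le1+\mu(1+\tfrac c2)$ holds at $\mu\deq\lambda/\delta$. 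At $\mu=\log(1+\tfrac c2)$ this is precisely $\tfrac{c/2}{1+c/2}\le\log(1+\tfrac c2)$, i.e.\ the elementary inequality $\tfrac{x}{1+x}\le\log(1+x)$ at $x=c/2$. Substituting $\lambda=\delta\log(1+\tfrac c2)$ into the aggregation step yields the claimed $\log|V|\big/\big(\delta\log(1+\tfrac c2)\big)$.

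For the ``Furthermore'': when $\p$ is an arbitrary $[0,1]$-valued tree and the sum ranges over $\{t:\p_t(y)\in[\delta,1-\delta]\}$, membership in this set is determined by $y_{1:t-1}$, so in the backward induction a coordinate $t$ outside the set contributes a trivial factor $\En[e^{\lambda\cdot0}\mid y_{1:t-1}]=1$ and the argument is unchanged. I expect the one-step MGF estimate to be the only real difficulty, and the delicate point there is getting the constant: a sub-Gaussian treatment (Hoeffding's lemma, using only $|W|\le1/\delta$ and $\mathrm{Var}(W)\le1/\delta$) forces $\lambda\sim c\delta^2$ and a bound $\sim\log|V|/(c\delta^2)$, which is lossy for bounded $c$ and misses the genuinely sub-exponential, logarithmic-in-$c$ behavior; exploiting the one-sided boundedness of $\eta(p,y)v$ together with a Bennett-type rather than Gaussian MGF bound is what makes the clean constant appear.
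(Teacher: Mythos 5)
Your proposal is correct and follows essentially the same route as the paper's proof: the soft-max aggregation reduces the claim to the one-step conditional MGF inequality $\En[e^{\lambda(\eta(p,y)v-cv^2)}]\le 1$ at $\lambda=\delta\log(1+\tfrac c2)$, which both you and the paper establish via the monotonicity of $(e^u-1-u)/u^2$ (you package it as the one-sided Bennett bound, the paper derives it inline), arriving at the same intermediate estimate $\En e^{\lambda W}\le\exp(2v^2\delta(e^{\lambda/\delta}-\lambda/\delta-1))$ and the same closing inequality $\tfrac{x}{1+x}\le\log(1+x)$. Your explicit use of $q=\min(p,1-p)$ is a minor tidying of the paper's implicit reduction to $p<1/2$; otherwise the arguments coincide.
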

The tight control of the expectation is possible because of the negative quadratic term that acts as a compensator. On the downside, the upper bound displays the adverse $1/\delta$ dependence. We now show a maximal inequality when the quadratic term is not present. The bound is of a Bernstein type, with the sub-Gaussian and sub-exponential behaviors. Crucially, the sub-Gaussian term scales with $1/\sqrt{\delta}$.
\begin{lemma}
	\label{eq:finite_biased_l2}
	For any set $V$ consisting of $[-1,1]$ valued trees, any $[\delta,1-\delta]$-valued tree $\p$, and any $c>0$,
\begin{align*}
\En_y \max_{\v \in V} \left[ \sum_{t=1}^n   \eta(\p_t(y),y_t) \v_t(y) \right] \leq  5\bar{v}\sqrt{\frac{n\log |V|}{\delta}} + \frac{2v_{\text{max}}\log |V|}{\delta}
\end{align*}
where $y_t|y_1,\ldots,y_{t-1} \sim \text{\sf Bernoulli}(\p_t(y_1,\ldots,y_{t-1}))$, $\bar{v} = \max_{\v\in V}\max_{y} (\frac{1}{n}\sum_{t=1}^n \v_t(y)^2)^{1/2}$, and $v_{\text{max}}=\max_{\v\in V}\max_{y} |\v_t(y)|$. The same upper bound holds if $\p$ is any $[0,1]$-valued tree but the summation is restricted to $\{t: \p_t(y)\in[\delta,1-\delta]\}$.
\end{lemma}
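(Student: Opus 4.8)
The plan is to fix a tree $\v\in V$, view the partial sums $M_t^{\v}=\sum_{s=1}^{t}\eta(\p_s(y),y_s)\,\v_s(y)$ as a martingale with respect to $\mathcal F_t=\sigma(y_1,\dots,y_t)$, establish a Bernstein-type bound on its conditional moment generating function, and conclude by a union bound over the finite set $V$. The martingale property holds because $\v$ and $\p$ are trees, so $\v_s(y)$ and $\p_s(y)$ are $\mathcal F_{s-1}$-measurable, while $\En[\eta(\p_s(y),y_s)\mid\mathcal F_{s-1}]=0$ by the remark following \eqref{def:eta}. The key observation---and the reason the sub-Gaussian term scales with $1/\sqrt{\delta}$ rather than $1/\delta$---is that, although the increment $\xi_s:=\eta(\p_s(y),y_s)\,\v_s(y)$ can be as large as $v_{\mathrm{max}}/\delta$ in magnitude, its conditional variance is only of order $\v_s(y)^2/\delta$. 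Indeed, for $a\sim\mathrm{Bernoulli}(p)$ one has $\En\,\eta(p,a)^2=\tfrac1p+\tfrac1{1-p}=\tfrac1{p(1-p)}$, so on rounds with $\p_s(y)\in[\delta,1-\delta]$,
\[
\En[\xi_s^2\mid\mathcal F_{s-1}]=\frac{\v_s(y)^2}{\p_s(y)\,(1-\p_s(y))}\le\frac{2\,\v_s(y)^2}{\delta},
\]
using $\p_s(y)(1-\p_s(y))\ge\delta(1-\delta)\ge\delta/2$; moreover $|\eta(\p_s(y),y_s)|\le1/\delta$, so in particular $\xi_s\le v_{\mathrm{max}}/\delta$.

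I would then invoke the standard Bernstein (Bennett) bound for a zero-mean random variable $\xi$ with $\xi\le b$ almost surely: for $0\le\lambda<3/b$ one has $\log\En e^{\lambda\xi}\le \tfrac{\lambda^2\,\En\xi^2/2}{1-\lambda b/3}$. Applying it conditionally on $\mathcal F_{s-1}$ with $b=v_{\mathrm{max}}/\delta$ and the variance estimate above gives, for $0\le\lambda<3\delta/v_{\mathrm{max}}$,
\[
\log\En\!\left[e^{\lambda\xi_s}\mid\mathcal F_{s-1}\right]\le\frac{\lambda^2\,\v_s(y)^2/\delta}{1-\lambda v_{\mathrm{max}}/(3\delta)}.
\]
Iterating this bound round by round---at each step the factor $\exp\!\left(\tfrac{\lambda^2\v_s(y)^2/\delta}{1-\lambda v_{\mathrm{max}}/(3\delta)}\right)$ is $\mathcal F_{s-1}$-measurable and can be pulled out, and $\sum_{s=1}^n\v_s(y)^2\le n\bar{v}^2$ for every path $y$---yields $\En e^{\lambda M_n^{\v}}\le\exp\!\left(\tfrac{\lambda^2\,n\bar{v}^2/\delta}{1-\lambda v_{\mathrm{max}}/(3\delta)}\right)$.

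Finally, Jensen's inequality and a union bound give $\lambda\,\En\max_{\v\in V}M_n^{\v}\le\log|V|+\tfrac{\lambda^2 n\bar{v}^2/\delta}{1-\lambda v_{\mathrm{max}}/(3\delta)}$, and the usual Bernstein optimization---choosing $\lambda^{-1}=\sqrt{n\bar{v}^2/(\delta\log|V|)}+v_{\mathrm{max}}/(3\delta)$, which lies in the admissible range---produces $\En\max_{\v\in V}M_n^{\v}\le 2\bar{v}\sqrt{n\log|V|/\delta}+v_{\mathrm{max}}\log|V|/(3\delta)$, comfortably inside the stated bound. For the variant in which $\p$ is an arbitrary $[0,1]$-valued tree and the sum runs only over $\{s:\p_s(y)\in[\delta,1-\delta]\}$, I would set $\xi_s:=0$ on the excluded rounds; this keeps the martingale structure and the estimates $\En[\xi_s^2\mid\mathcal F_{s-1}]\le 2\v_s(y)^2/\delta$ and $\xi_s\le v_{\mathrm{max}}/\delta$, and does not enlarge $\bar{v}$ or $v_{\mathrm{max}}$, so the argument goes through verbatim.

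The real content is the variance computation: relying only on $|\xi_s|\le v_{\mathrm{max}}/\delta$ would force a sub-Gaussian term of order $v_{\mathrm{max}}\sqrt{n\log|V|}/\delta$, with the wrong power of $\delta$, whereas recognizing that the large values of $\eta$ occur only on an event of probability $\approx\delta$ brings it down to $\bar{v}\sqrt{n\log|V|/\delta}$---this is the Bernstein phenomenon the lemma is built on. The remaining steps---the conditional MGF estimate, its iteration with the path-dependent compensator $\sum_s\v_s(y)^2$, the union bound, and the scalar optimization over $\lambda$---are routine; the only care needed is to check that the optimizing $\lambda$ respects $\lambda<3\delta/v_{\mathrm{max}}$ and to dispose of the degenerate cases $|V|=1$ or $\bar{v}=0$, where both sides vanish.
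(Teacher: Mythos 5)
Your proof is correct and follows essentially the same route as the paper's: a conditional Bennett/Bernstein bound on the moment generating function that exploits the variance $\En[\eta(p,a)^2]=1/(p(1-p))\le 2/\delta$ rather than the worst-case magnitude $1/\delta$, iterated along the martingale with the path-wise compensator $\sum_t \v_t(y)^2\le n\bar v^2$, followed by a union bound over $V$ and optimization in $\lambda$. The only cosmetic difference is that the paper keeps the Bennett function $\phi(\lambda/\delta)=e^{\lambda/\delta}-\lambda/\delta-1$ and optimizes $\lambda$ via a two-case analysis, whereas you pass to the standard bound $\phi(x)\le \frac{x^2/2}{1-x/3}$ and optimize in closed form; both yield the stated inequality (yours with slightly better constants).
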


We now pass from a finite collection to an infinite one via the sequential chaining technique \cite{RakSriTew14ptrf}. For this purpose, we recall the definition of $\ell_p$ sequential covering numbers.
\begin{definition}[\cite{RakSriTew14ptrf}]
	\label{def:cover}
	A set $V$ of $\reals$-valued trees of depth $n$ is a (sequential) $\gamma$-cover (with respect to $\ell_p$, $p\geq 1$) of $\G\subseteq\reals^\Z$ on a $\Z$-valued tree $\z$ of depth $n$ if
	\begin{align}
		\label{eq:def_cover}
		\forall g\in\G,~~ y\in\{0,1\}^n, ~~\exists \v\in V,~~\text{s.t.}~~ \left(\frac{1}{n}\sum_{t=1}^n |\v_t(y)-g(\z_t(y))|^p \right)^{1/p}\leq \gamma.
	\end{align}
	The size of the smallest $\gamma$-cover is denoted by $\cN_p(\G,\gamma,\z)$. For $p=\infty$, \eqref{eq:def_cover} becomes $\max_{t}|\v_t(y)-g(\z_t(y))|\leq \gamma$.
\end{definition}

\begin{theorem}
	\label{thm:biased_dudley}
	Let $\G$ be a class of functions $\Z\to [-1,1]$. For any $[0,1]$-valued tree $\p$, any $\Z$-valued tree $\z$, any $K>0$, and $\gamma>0$,
	\begin{align*}
		&\En \sup_{g \in \G} \left[ \sum_{t: {\p_t(y)\in[\delta,1-\delta]}} \eta(\p_t(y),y_t) g(\z_t(y))  - K g(\z_t(y))^2 \right] \\
		&\hspace{1in}\leq \frac{1}{\delta} \frac{\log\ \cN_\infty(\G,\gamma,\z)}{ \log(1 + \frac{K}{8})}  +  \inf_{\alpha(0,\gamma]}\left\{ \frac{4 n \alpha}{\delta} + 30\sqrt{\frac{2n}{\delta}}\int_{\alpha}^\gamma \sqrt{\log\cN_\infty(\G,\rho,\z)}d\rho + \frac{8}{\delta}\int_{\alpha}^\gamma \log\cN_\infty(\G,\rho,\z)d\rho \right\} 
	\end{align*}
	where the stochastic process $y_1,\ldots,y_n$ is defined via $y_t|y_1,\ldots,y_{t-1} \sim \text{\sf Bernoulli}(\p_t(y_1,\ldots,y_{t-1}))$.
\end{theorem}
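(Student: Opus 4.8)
The plan is to run the standard sequential chaining argument of \cite{RakSriTew14ptrf}, but splitting each chaining increment into two regimes controlled separately by Lemma~\ref{eq:finite_biased} and Lemma~\ref{eq:finite_biased_l2}, so that the quadratic compensator $-Kg(\z_t(y))^2$ is used only near the "top" of the chain where the discretization is coarse, and the Bernstein-type $L_2$ bound (with its $1/\sqrt\delta$ sub-Gaussian scaling) is used for all the finer links. First I would fix a geometric sequence $\gamma = \gamma_0 \geq \gamma_1 \geq \cdots$ with $\gamma_j = 2^{-j}\gamma$, and for each $j$ let $V_j$ be a minimal $\ell_\infty$-cover of $\G$ at scale $\gamma_j$ on $\z$, of size $\cN_\infty(\G,\gamma_j,\z)$. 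For a path $y$ and $g\in\G$, let $v^{[j]}=v^{[j]}(g,y)\in V_j$ denote the element witnessing the cover, so that $\max_t |v^{[j]}_t(y) - g(\z_t(y))| \leq \gamma_j$. Telescoping, $g(\z_t(y)) = v^{[0]}_t(y) + \sum_{j\geq 1}\big(v^{[j]}_t(y)-v^{[j-1]}_t(y)\big)$ on the relevant coordinates (truncating the chain at level $j=N$ where $\gamma_N=\alpha$ and absorbing the tail into an $\alpha$-approximation error term, which contributes the $\frac{4n\alpha}{\delta}$ piece after using $|\eta|\leq 1/\delta$ on the restricted index set and the crude bound $\sum_t |g - v^{[N]}| \le n\alpha$, plus a factor from the quadratic term which is $\le$ a constant times $n\alpha$). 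The key point is that the increment tree $w^{[j]} := v^{[j]}-v^{[j-1]}$ is $[-2,2]$-valued (rescale to $[-1,1]$), lives in a set of size at most $|V_j|\cdot|V_{j-1}| \le \cN_\infty(\G,\gamma_j,\z)^2$, and satisfies the pointwise $L_\infty$ bound $\|w^{[j]}\|_\infty \le 3\gamma_{j-1} = 6\gamma_j$ as well as the $L_2$ bound $\bar v_j \le 3\gamma_{j-1}$, since both $v^{[j]}$ and $v^{[j-1]}$ are within $\gamma_{j-1}$ of $g$ in $\ell_\infty$.

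Next I would distribute the negative quadratic term across the chain. Using $g(\z_t(y))^2 \geq \tfrac12 (v^{[0]}_t(y))^2 - (v^{[0]}_t(y)-g(\z_t(y)))^2 \geq \tfrac12 (v^{[0]}_t(y))^2 - \gamma^2$ (or a cleaner inequality of this flavor), I keep a term $-\tfrac{K}{2}\sum_t (v^{[0]}_t(y))^2$ attached to the coarsest level and discard the remaining (nonpositive, up to an $O(n\gamma^2)$ slack) contributions of the quadratic at finer levels, folding the slack into the $\alpha$-term. Then: for the coarsest level I apply Lemma~\ref{eq:finite_biased} with $V=V_0$, $c=K/2$ (rescaling $v^{[0]}$ to $[-1,1]$ introduces the constant, yielding $\log(1+K/8)$ in the denominator and $\log \cN_\infty(\G,\gamma,\z)$ in the numerator), giving the first summand $\frac{1}{\delta}\frac{\log \cN_\infty(\G,\gamma,\z)}{\log(1+K/8)}$. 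For each finer level $j\geq 1$ I apply Lemma~\ref{eq:finite_biased_l2} to the (rescaled) increment set $\{w^{[j]}\}$ with $\log|V| \le 2\log\cN_\infty(\G,\gamma_j,\z)$, $\bar v \lesssim \gamma_j$, $v_{\max}\lesssim \gamma_j$, obtaining a bound of order $\gamma_j\sqrt{\frac{n\log\cN_\infty(\G,\gamma_j,\z)}{\delta}} + \frac{\gamma_j\log\cN_\infty(\G,\gamma_j,\z)}{\delta}$ for the $j$-th link.

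Summing over $j$, the sub-Gaussian links telescope into a Dudley-type integral: $\sum_{j\ge 1}\gamma_j\sqrt{\log\cN_\infty(\G,\gamma_j,\z)} \lesssim \int_\alpha^\gamma \sqrt{\log\cN_\infty(\G,\rho,\z)}\,d\rho$ (since $\gamma_j - \gamma_{j+1} = \gamma_j/2$ and $\cN_\infty$ is monotone in $\rho$), contributing the $30\sqrt{2n/\delta}\int_\alpha^\gamma \cdots$ term after tracking the explicit constant $5$ from Lemma~\ref{eq:finite_biased_l2}, the rescaling factor, and the $2^{-j}$ summation; similarly the sub-exponential links give $\frac{8}{\delta}\int_\alpha^\gamma \log\cN_\infty(\G,\rho,\z)\,d\rho$. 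Finally, taking the infimum over the truncation scale $\alpha\in(0,\gamma]$ yields the stated bound. I expect the main obstacle to be bookkeeping the constants so that the quadratic compensator is "spent" exactly once — at the top of the chain — while ensuring the discarded quadratic contributions at lower levels are genuinely nonpositive (or controllably small); a clean way is to note $\sum_t g^2 \ge c_0 \sum_t (v^{[0]})^2 - O(n\gamma^2)$ with an honest constant $c_0$ and verify $c_0 K$ still gives $\log(1+K/8)$ after the $[-1,1]$ rescaling, rather than trying to split $K$ additively across all levels.
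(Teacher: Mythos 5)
Your architecture is the same as the paper's: peel off the coarsest cover and control it with Lemma~\ref{eq:finite_biased} (quadratic compensator, giving the $\log(1+K/8)$ term), chain the remaining increments with Lemma~\ref{eq:finite_biased_l2} applied to difference trees of size at most $|V^j|^2$ with $\ell_\infty$/$\ell_2$ radius $O(\gamma 2^{-j})$, bound the tail of the chain by $|\eta|\le 1/\delta$ times $n\gamma2^{-N}\lesssim n\alpha$, and convert the geometric sums into Dudley integrals. All of that matches, including the constants you track.

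The one place your proposal does not deliver the stated inequality is the transfer of the quadratic term from $g$ to the coarsest cover element. Your additive inequality $\sum_t g(\z_t(y))^2 \ge c_0\sum_t (v^{[0]}_t(y))^2 - O(n\gamma^2)$ leaves a residual $+Kn\gamma^2$ in the final bound, and this cannot be ``folded into the $\alpha$-term'': $\alpha$ ranges over $(0,\gamma]$ and is typically taken much smaller than $\gamma$ (e.g.\ $\alpha=1/n$ in Section~\ref{sec:monotonic}), so $Kn\gamma^2$ is not dominated by $4n\alpha/\delta$, and no such term appears in the theorem. The paper avoids this by a purely multiplicative comparison: augment the cover with the all-zero tree, and for each $(y,g)$ either $\frac1n\sum_t g(\z_t(y))^2\le\gamma^2$ (in which case the zero tree is a valid $\ell_2$-cover element and the quadratic term for it vanishes), or $\|h\|\ge\gamma$ so that $\|u\|\le\|u-h\|+\|h\|\le 2\|h\|$, i.e.\ $\sum_t g(\z_t(y))^2\ge\frac14\sum_t \v_t(y)^2$ with no additive slack. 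This is why the compensator arrives at the finite lemma as $c=K/4$ and the denominator is $\log(1+K/8)$. With that substitution (and the standard path-consistent definition of the difference trees $W^j$, which you gloss over but which is the usual construction from \cite{RakSriTew14ptrf}), your argument goes through.
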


Theorem~\ref{thm:biased_dudley} is readily applied to the upper bound of Theorem~\ref{thm:value_upper} by identifying $$\G = \{g_f(z) = g_f(x,\mu) = \mu - f(x): f\in\F^\delta, \mu\in\reals, x\in\X \}$$ and $\z_t(y) = (\x_t(y),\bmu_t(y))$. It is immediate from the definition of a cover that for any $\bmu$, $\x$, and $\z=(\x,\bmu)$,
\begin{align}
	\label{eq:same_cover}
	\cN_p(\F^\delta,\x,\alpha) = \cN_p(\G,\z,\alpha).
\end{align}

The lower bound of Lemma~\ref{lem:lower_bound} (presented in Section~\ref{sec:lower}) and the relation between the offset Rademacher complexity and sequential fat-shattering dimension  \cite{RakSriTew14ptrf,RakSri15nonparametric} yield the next theorem.
\begin{theorem}
	For the case of constant sets $\X_1=\X_2=\ldots=\X$, the following are equivalent:
	\begin{itemize}
		\item Minimax regret is sublinear: $\frac{1}{n}V_n(\F)\to 0$ as $n\to \infty$
		\item Sequential dimension $\fat_\beta(\F,\X)$ is finite for all $\beta>0$
	\end{itemize}
\end{theorem}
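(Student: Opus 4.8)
The statement is an equivalence; I would obtain ``finite dimension $\Rightarrow$ sublinear regret'' directly from the upper bounds proved above, and ``sublinear regret $\Rightarrow$ finite dimension'' by contraposition through an explicit adversary. For the first implication, assume $\fat_\beta(\F,\X)<\infty$ for every $\beta>0$, fix $\epsilon>0$, and aim to show $\limsup_{n}\tfrac1n V_n(\F)\le\epsilon$. I would chain the available estimates: by \eqref{eq:off_delta}, $V_n(\F)\le V_n(\F^\delta)+2n\delta$; by Theorem~\ref{thm:value_upper} in the form \eqref{eq:succinct_form}, $V_n(\F^\delta)$ is at most a biased offset-Rademacher term plus $2n\delta\log(1/\delta)$; and by Theorem~\ref{thm:biased_dudley} with $K=\tfrac12$ that offset term is controlled. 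Taking the chaining parameter in Theorem~\ref{thm:biased_dudley} equal to $\gamma$ makes the two chaining integrals vanish and leaves only $\tfrac1\delta\,\tfrac{\log \cN_\infty(\G,\gamma,\z)}{\log(1+\frac1{16})}+\tfrac{4n\gamma}{\delta}$. Now $\cN_\infty(\G,\gamma,\z)=\cN_\infty(\F^\delta,\gamma,\x)\le \cN_\infty(\F,\gamma,n)$ by \eqref{eq:same_cover} and the fact that $a\mapsto\thresh{a}$ is $1$-Lipschitz (so composing a cover of $\F$ with it gives a cover of $\F^\delta$ of no larger size), and the sequential Sauer--Shelah lemma of \cite{RakSriTew14ptrf} gives $\log\cN_\infty(\F,\gamma,n)=O\!\big(\fat_\gamma(\F,\X)\log(n/\gamma)\big)$. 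Altogether
\[
V_n(\F)\ \le\ O_{\gamma,\delta}(\log n)+\tfrac{4n\gamma}{\delta}+2n\delta\log(1/\delta)+2n\delta ,
\]
so $\limsup_n \tfrac1n V_n(\F)\le \tfrac{4\gamma}{\delta}+2\delta\log(1/\delta)+2\delta$; choosing first $\delta\in(0,\tfrac12)$ with $2\delta\log(1/\delta)+2\delta<\epsilon/2$ and then $\gamma>0$ (for which $\fat_\gamma(\F,\X)<\infty$ by hypothesis) with $4\gamma/\delta<\epsilon/2$ finishes this direction.

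\textbf{The ``sublinear regret $\Rightarrow$ finite dimension'' direction.}
I would prove the contrapositive: if $\fat_{\beta_0}(\F,\X)=\infty$ for some $\beta_0>0$, then $V_n(\F)\ge\tfrac12\beta_0 n$ for every $n$, so $\tfrac1n V_n(\F)\not\to0$. Since $\fat_{\beta_0}=\infty$, for each $n$ there is an $\X$-valued tree $\x$ of depth $n$ that is $\beta_0$-shattered by $\F$ with some witness tree $\s$ (with each $\s_t(y)\in[\beta_0,1-\beta_0]$, as $\F$ is $[0,1]$-valued); crucially, because $\X_t(\cdot)\equiv\X$, every $\X$-valued tree---hence $\x$---is consistent, so it is an admissible play for Nature. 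Run the deterministic ``be-contrary'' adversary: at round $t$ play $x_t=\x_t(y_{1:t-1})$, observe $\pred_t$, and set $y_t=1$ if $\pred_t<\s_t(y_{1:t-1})$ and $y_t=0$ otherwise. Let $y$ be the resulting path and $f_y\in\F$ a shattering witness for it. On a round with $y_t=1$ we have $\pred_t<\s_t(y)$ and $f_y(\x_t(y))\ge\s_t(y)+\beta_0$, so
\[
\loss(\pred_t,y_t)-\loss(f_y(\x_t(y)),y_t)=-\log\pred_t+\log f_y(\x_t(y))\ \ge\ \log\tfrac{\s_t(y)+\beta_0}{\s_t(y)}\ \ge\ \log(1+\beta_0)\ \ge\ \tfrac12\beta_0 ,
\]
and symmetrically on rounds with $y_t=0$ (comparing $-\log(1-\pred_t)$ with $-\log(1-f_y(\x_t(y)))\le-\log(1-\s_t(y)+\beta_0)$). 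Summing over $t$ and using $\inf_{f\in\F}\sum_t\loss(f(\x_t(y)),y_t)\le\sum_t\loss(f_y(\x_t(y)),y_t)$ shows every forecaster suffers regret at least $\tfrac12\beta_0 n$ against this adversary, hence $V_n(\F)\ge\tfrac12\beta_0 n$. Equivalently, one may feed a $\beta_0$-shattered tree into Lemma~\ref{lem:lower_bound} and invoke the lower-bound side of the offset-Rademacher / sequential-fat-shattering correspondence \cite{RakSriTew14ptrf,RakSri15nonparametric}.

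\textbf{Main obstacle and a remark.}
In the first direction the only real care is to keep $\gamma$ and $\delta$ at a \emph{fixed} scale: all $\fat$-dependent quantities are evaluated at the fixed $\gamma$, hence are $O_{\gamma,\delta}(\log n)=o(n)$, whereas letting $\gamma\to0$ along with $n$ would be fatal since $\fat_\gamma(\F,\X)$ may blow up arbitrarily fast as $\gamma\to0$; the $\Theta(n)$ terms are exactly $2n\delta\log(1/\delta)$ and $4n\gamma/\delta$, whose coefficients are killed by the order-dependent choice of constants. In the second direction the subtlety is that the shattering witness need not equal $\tfrac12$, which is why the adversary must compare $\pred_t$ to $\s_t(y)$ rather than to $\tfrac12$; and the constancy assumption $\X_t\equiv\X$ is used precisely to guarantee that a shattered tree is an admissible play. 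For non-constant side-information sets (e.g.\ shrinking $\X_t$) the equivalence can fail and only the first implication survives, with $\fat$ replaced by a ``consistent'' fat-shattering dimension counting only shattered trees that respect the constraints $\X_t$.
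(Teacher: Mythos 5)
Your proposal is correct. The first direction reproduces what the paper intends: chain \eqref{eq:off_delta}, Theorem~\ref{thm:value_upper} in the form \eqref{eq:succinct_form}, and Theorem~\ref{thm:biased_dudley} with $K=1/2$ and $\alpha=\gamma$ at a \emph{fixed} scale, then bound $\log\cN_\infty$ at that fixed scale via the sequential Sauer--Shelah estimate; your remark that $\gamma$ and $\delta$ must be frozen before $n\to\infty$ (so that the $\fat_\gamma$-dependent terms are $O(\log n)$ and only the coefficients of the linear terms need to be made small, in the order $\delta$ first, then $\gamma$) is exactly the crux, and your Lipschitz argument for $\cN_\infty(\F^\delta,\gamma,\cdot)\le\cN_\infty(\F,\gamma,\cdot)$ is fine. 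Where you genuinely depart from the paper is the converse: the paper obtains it by combining Lemma~\ref{lem:lower_bound} with the known lower bound on the offset sequential Rademacher complexity in terms of $\fat_\beta$ from \cite{RakSriTew14ptrf,RakSri15nonparametric}, whereas you give a direct, self-contained ``be-contrary'' adversary on a shattered tree, comparing $\pred_t$ to the witness $\s_t(y)$ and cashing out a per-round gap of $\log(1+\beta_0)$ against the shattering function $f_y$. Your route is more elementary and avoids the re-centering issue hidden in Lemma~\ref{lem:lower_bound} (whose linear term is centered at $1/2$ rather than at the witness), at the cost of not reusing the offset-Rademacher machinery; it also makes transparent where the constancy $\X_t\equiv\X$ enters (admissibility of the shattered tree), which the paper only alludes to. Two cosmetic points: Definition~\ref{def:fat} uses margin $\beta/2$, not $\beta$, so your constants shift by a factor of $2$ (and the witness range is $[\beta_0/2,1-\beta_0/2]$); and strictly speaking Nature in \eqref{eq:unrolled_minimax_with_swap} plays a distribution $p_t$, so your deterministic adversary corresponds to $p_t\in\{0,1\}$ chosen after seeing $\pred_t$, which the order of operators permits. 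Neither affects correctness.
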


Let us make a few remarks. First, the theorem can be easily extended to non-constant sets $\X_t$, in which case $\fat_\beta$ is defined with respect to consistent trees (as in the next section). Second, one may also phrase the equivalence through sequential covering numbers, thanks to the relations outlined in \cite{RakSriTew14ptrf,RakSri15nonparametric}. 

In summary, the sequential complexities we study are intrinsic to the problem of sequential probability assignment (unlike, for instance, covering numbers with respect to the supremum norm on $\X$ --- see Section~\ref{sec:example_history} for an example). Yet, the upper bounds we derive do not quite match the lower bounds, due to the hard thresholding approach and the need to balance $n\delta$ with $V_n(\F^\delta)$ at the end of the day. It is an open problem to close the gap between the upper and lower bounds.

The upper bound of Theorem~\ref{thm:biased_dudley} is quantified as soon as we have control of sequential covering numbers. While covering numbers could be computed directly in many situations, it is often simpler to upper bound a ``scale-sensitive dimension'' of the class, defined in the next section. In Section~\ref{sec:monotonic} we present an example of such a simple calculation.

\section{Covering Numbers and Combinatorial Parameters}
\label{sec:cov_comb}

Suppose we can define a preorder $\preceq$ on the set $\X$ (that is, a binary relation that is reflexive and transitive). We say that an $\X$-valued tree $\x$ of depth $n$ is \emph{ordered} if for any path $y\in\{0,1\}^n$, it holds that $\x_t(y)\preceq \x_{t+1}(y)$ for all $t=1,\ldots,n-1$. In this section we show that the combinatorial dimensions, covering numbers, and the associated upper bounds in \cite{RakSriTew14ptrf} can be extended to ``respect'' the preorder (of course, one can always define a vacuous relation $\preceq$ and recover prior results).

\begin{definition}
	\label{def:fat}
	A class $\F\subset \reals^\X$ shatters (at scale $\beta>0$) an ordered $\X'$-valued tree of depth $d$ if there exists a $\reals$-valued witness tree $\s$ of depth $d$ such that 
	$$\forall y\in\{0,1\}^d,~~ \exists f\in\F,~~ \mbox{s.t.}~~ (2y_t-1)(f(\x_t(y))-\s_t(y))\geq \beta/2.$$
	The largest depth of an ordered $\X'$-valued tree is denoted by $\fat_\beta^o(\F,\X')$, where the superscript $o$ stands for ``ordered''.
\end{definition}

The notion of the Littlestone's dimension $\ldim(\F,\X')$ for $\{0,\ldots,k\}$-valued function classes extends in exactly the same way to the case of ordered trees.

The main step in obtaining upper bounds on sequential covering numbers is the analogue of the Vapnik-Chervonenkis-Sauer-Shelah lemma, proved in \cite{RakSriTew14ptrf,RakSriTew10}. We now show that if we ask for a $\beta$-cover on an ordered tree $\x$, the sequential covering numbers are controlled via the ordered version $\fat_\beta^o(\F,\img(\x))$ of the fat-shattering dimension in Definition~\ref{def:fat}.

\begin{theorem}[Extension of Theorem 4 in \cite{RakSriTew14ptrf}]
	\label{thm:covering_estimate}
	Let $\F\subseteq \{0,\ldots,k\}^\X$ be a class of functions with $\fat_2^o(\F,\X)=d$. Then for any $n>d$ and any ordered $\X$-valued tree $\x$,
	$$\cN_\infty (\F,1/2,\x) \leq \sum_{i=0}^d {n\choose i}k^i.$$
	Hence, for a class $\G\subseteq[-1,1]^\X$, for any $\beta>0$,
	$$\cN_\infty(\G, \beta, \x)\leq \left(\frac{2en}{\beta}\right)^{\fat_\beta^o(\G,\X)}.$$
\end{theorem}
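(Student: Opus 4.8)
The plan is to reduce the second inequality to the first, then prove the first by revisiting the inductive proof of Theorem~4 in \cite{RakSriTew14ptrf} and checking that the one extra constraint --- that the shattered trees must be \emph{ordered} --- is preserved at every step. For the reduction, given $\G\subseteq[-1,1]^\X$ and $\beta>0$, round each $g\in\G$ to the nearest integer multiple of $\beta$ and rescale by $1/\beta$, producing a class $\F\subseteq\{0,1,\ldots,k\}^\X$ with $k=\lceil 2/\beta\rceil$. Rescaling a scale-$\tfrac12$ $\ell_\infty$-cover of $\F$ on $\x$ back down yields a scale-$\beta$ cover of $\G$ on $\x$, so $\cN_\infty(\G,\beta,\x)\le\cN_\infty(\F,\tfrac12,\x)$; and a unit shattering margin for $\F$ --- i.e.\ margin $\beta$ for the rounded functions --- survives as margin $\beta/2$ for $\G$ itself after accounting for the rounding error $\le\beta/2$, so $\fat_2^o(\F,\X)\le\fat_\beta^o(\G,\X)$. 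Combining these with the first inequality and the standard estimate $\sum_{i=0}^d\binom ni k^i\le(enk/d)^d$ for the partial binomial sum gives the stated bound.

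It remains to prove the first inequality: for $\F\subseteq\{0,\ldots,k\}^\X$ with $\fat_2^o(\F,\X)=d$ and any ordered tree $\x$ of depth $n$, $\cN_\infty(\F,\tfrac12,\x)\le g(n,d):=\sum_{i=0}^d\binom ni k^i$. I would run the same induction on the depth $n$ as in \cite{RakSriTew14ptrf}. The base case $n\le d$ is immediate: e.g.\ the $(k+1)^n$ level-constant trees $\v_t(y)=\phi(t)$, $\phi:\{1,\ldots,n\}\to\{0,\ldots,k\}$, already form an (exact) cover. For the inductive step, let $z=\x_1(\emptyset)$ be the root label and $\x^0,\x^1$ the depth-$(n-1)$ left and right subtrees; since $\x$ is ordered, both subtrees are ordered and, by transitivity along any path, $z\preceq w$ for every $w\in\img(\x^0)\cup\img(\x^1)$. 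Following \cite{RakSriTew14ptrf}, one builds a cover of $\F$ on $\x$ out of a cover of a projection class on a depth-$(n-1)$ subtree --- whose ordered fat-shattering dimension is still $\le d$ --- together with $k$ covers of ``extension'' classes whose ordered fat-shattering dimension has dropped to $\le d-1$; this yields the Pascal-type recursion $g(n,d)=g(n-1,d)+k\,g(n-1,d-1)$, with the factor $k$ coming, exactly as in the integer-valued case of \cite{RakSriTew14ptrf}, from sweeping the $k$ unit thresholds at the root $z$.

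The only point where the ordering enters is the justification of the ``$\le d-1$'' bound for the extension classes: from a shattered ordered tree of depth $d-1$ for such a class --- which lives on $\x^0$ or $\x^1$, hence on nodes $\succeq z$ --- one obtains a shattered ordered tree of depth $d$ for $\F$ by placing $z$ at a new root and extending the witness tree $\s$ by one level at $z$ (with witness value taken from the gap between the small and large root values of the two sub-families). Prepending $z$ preserves orderedness precisely because $z\preceq$ every node of the subtree, so this construction is legitimate and contradicts $\fat_2^o(\F,\X)=d$ when the cover is too large. I therefore expect the main obstacle to be careful bookkeeping rather than a new idea: one must transcribe the (somewhat intricate) cover and witness-tree constructions of \cite{RakSriTew14ptrf} and verify at each branch that the restricted classes shatter only ordered trees and that the exhibited depth-$d$ tree is genuinely ordered.
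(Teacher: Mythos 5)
Your proposal is correct and follows essentially the same route as the paper: both reduce the real-valued bound to the integer-valued one by discretization, run the induction of Theorem~4 of \cite{RakSriTew14ptrf} via the recursion $g_k(d,n-1)+k\,g_k(d-1,n-1)=g_k(d,n)$, and isolate the single new point as the root-joining step, where orderedness of the depth-$d$ shattered tree is preserved because $\x_1\preceq a$ for every $a\in\img(\x)$ by transitivity of the preorder along paths. The paper phrases the contradiction as ``no two classes $\F_i,\F_j$ with $j-i\ge 2$ can both have ordered dimension $d$,'' which is the same mechanism you describe as the extension classes dropping to dimension $d-1$.
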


The following three sections are devoted to particular examples. We start by exhibiting a simple class for which sequential covering numbers are small, yet the discretization with respect to the supremum norm (typically performed to appeal to a finite-experts method) gives vacuous bounds. 

\section{Example: Consistent History}
\label{sec:example_history}

We would like to illustrate that sequential covering number can be much smaller than covering numbers with respect to the supremum norm over $\X$. Consider the particular case of $\X_t(h_{1:t-1})=\{(y_1,\ldots,y_{t-1})\}$. Clearly, there is only one consistent tree, namely the one defined by $\x_t(y) = (y_1,\ldots,y_{t-1})$ for any $t$. In this case, the requirement \eqref{eq:def_cover} in Definition~\ref{def:cover} with class $\F^\delta$, consistent tree $\x$, and $p=\infty$ reads as
	\begin{align}
		\label{eq:def_cover2}
		\forall f\in\F^\delta,~~ y\in\{0,1\}^n,~~ \exists \v\in V,~~\text{s.t.}~~ |\v_t(y_{1:t-1})-f(y_{1:t-1})| \leq \gamma.
	\end{align}
We contrast this with the definition in \cite[Sec. 9.10]{PLG}, where the covering of $\F$ is done with respect to the following pointwise metric (which we normalized by $\sqrt{n}$ for uniformity):
\begin{align}
	\label{def:sup_norm_distance}
	d(f,g) = \sqrt{\frac{1}{n}\sum_{t=1}^n \sup_{y_{1:t}} \left( \loss(f(y_{1:t-1}), y_t) - \loss(g(y_{1:t-1}), y_t)\right)^2}.
\end{align}
To illustrate a gap in the two covering-number approaches, construct a particular class $\F$ as follows. For each element $b\in\{0,1\}^n$, define $f_b$ by
$$f_b(y_{1:t-1})=\frac{1}{4}\ind{b_{1:t-1}=y_{1:t-1}}+ \frac{1}{4}$$
and take $\F=\{f_b:b\in\{0,1\}^n\}$. In other words, on round $t$, expert $f_b$ predicts probability $1/2$ if history coincides with $b_{1:t-1}$, and $1/4$ otherwise. For two elements $f_{b},f_{b'}\in\F$, let $\kappa(b,b')=\max\{t: b_t= b_t'\}$ be the last time the two sequences agree (defined as $0$ if $b_1\neq b_1'$). Then
\begin{align*}
	&\sum_{t=1}^n \sup_{y_{1:t}} \left( \loss(f_b(y_{1:t-1}), y_t) - \loss(f_{b'}(y_{1:t-1}), y_t) \right)^2 \geq \sum_{t=1}^n \left( \loss(f_b(b_{1:t-1}), 1) - \loss(f_{b'}(b_{1:t-1}), 1) \right)^2 
	\geq (n-\kappa(b,b'))\log(2)^2
\end{align*}
and thus there are at least $2^{n/2}$ functions at a constant distance $d(f,g)\geq c$. 

In contrast, consider sequential covering in the sense of \eqref{eq:def_cover2} (and Definition~\ref{def:cover}). Take any $y\in\{0,1\}^n$ and $f_b\in\F$. The sequence of $n$ values $(f_b(\emptyset), f_b(y_1),\ldots,f_b(y_{1:t-1}),\ldots, f_b(y_{1:n-1}))$ is equal to $1/2$ until $t=\kappa(b,y)$ and $1/4$ afterwards. Let $V$ be a set of $n$ trees $\v^1,\ldots,\v^n$ labeled by $\{1/4,1/2\}$. Each $\v^i$ is defined as
$$\forall y\in\{0,1\}^n, t\in\{1,\ldots,n\},~~~~~~ \v^i_t(y) = (1/4)\ind{t\leq i-1}+1/4.$$
It is immediate that this set of $n$ trees provides an exact cover of $\F$ (at scale $0$) in the sense of Definition~\ref{def:cover}. This leads to $\mathcal{O}(\log(n)/n)$ bounds on minimax regret, while the discretization with respect to the supremum norm \eqref{def:sup_norm_distance} fails. 

The above failure is endemic to approaches that attempt to discretize the set of experts before the prediction process even started. In contrast, sequential complexities can be viewed as an analogue of ``data-based'' discretization, which is known in statistical learning since the work of Vapnik and Chervonenkis in the 60's.

\section{Example: Monotonically Nondecreasing Experts}
\label{sec:monotonic}

We consider an example of a nonparametric class analyzed in \cite[p. 270]{PLG}. Let $f\in\F$ be a set of experts such that the forecasted probability does not decrease in time. To model this scenario in a general manner, we suppose that the side information $x_t = (t, x'_t)\in {\mathbb N}\times \X'_t(x_{1:t-1},y_{1::t-1})$ contains the time stamp, and $f(t+1,x'_t)\geq f(t,x''_t)$ for any $f\in\F$. The particular case of \emph{static} experts---with prediction depending only on $t$ and no other side information---has been considered in \cite{PLG}.

To invoke the results of the previous section, define a preorder on $(t,x)\in \X={\mathbb N}\times \X'$ according to the time stamp: $(t,u)\preceq (s,v)$ for any $t< s$ and $u,v\in \X'$. Suppose an ordered $\X$-valued tree $\x$ of depth $d$ is shattered, according to  Definition~\ref{def:fat}, with a witness tree $\s$. We claim that the values of the witness tree must be increasing by at least $\beta$ along the path $y=(1,1,1,\ldots)$. Indeed, consider any $t\geq 1$, and let $y'=(y_{1:t},0,y_{t+2:d})$. By the definition of shattering, there must be a function that satisfies  $f(\x_t(y'))\geq \s_t(y')+\beta/2$ and $f(\x_{t+1}(y'))\leq \s_{t+1}(y')-\beta/2$. Since $f(\x_t(y'))\leq f(\x_{t+1}(y'))$, we conclude that $\s_t(y)=\s_t(y')\leq \s_{t+1}(y')-\beta = \s_{t+1}(y)-\beta$. Hence, $\s_t$ increases by at least $\beta$ along the path $(1,\ldots,1)$ and thus $d\leq 1/\beta$. This quick calculation gives $\fat_\beta^o(\F,\X)\leq 1/\beta$.

In view of Theorem~\ref{thm:covering_estimate},
$$\log\cN_\infty(\F^\delta,\beta,\x)\leq (1/\beta)\log\left(2en/\beta\right)$$
In view of \eqref{eq:same_cover}, the same covering number estimate holds for $\G$. Then Theorem~\ref{thm:biased_dudley} with $\alpha=1/n$ and $\gamma=n^{-a}$ (with $a$ to be determined later) implies that 
$$\int_{\alpha}^\gamma \log\cN_\infty(\G,\rho,\z)d\rho \leq C\log^2 n$$
is a lower order term, with $C$ being an absolute constant. We also have
$$\int_{\alpha}^\gamma \sqrt{\log\cN_\infty(\G,\rho,\z)}d\rho\leq C'\sqrt{\log n} \cdot \gamma^{1/2}.$$
Now, ignoring constants and logarithmic terms, this gives the overall rate of 
$$\mathcal{O}^*\left(\frac{1}{\delta\gamma} + \sqrt{\frac{n\gamma}{\delta}}\right) = \mathcal{O}^*\left(n^{1/3}\delta^{-2/3}\right)$$
for the minimax regret with respect to $\F^\delta$. The terms are balanced by choosing 
$\gamma=n^{-1/3}\delta^{-1/3}$. The rate with respect to $\F$ is then 
$$\mathcal{O}^*\left(n\delta + n^{1/3}\delta^{-2/3}\right) = \mathcal{O}^*\left(n^{3/5}\right)$$
by choosing $\delta=n^{-2/5}$. This corresponds to the rate obtained by \cite{PLG}.

\section{Example: Linear Prediction}
\label{sec:linear}

In this section we consider the special case of $\X_1=\ldots=\X_n=\X=B_2$ and
\begin{align}
	\label{eq:lin_class}
	\F = \{f(x)=(\inner{w,x}+1)/2: w\in B_2\}
\end{align}
where $B_2$ is a unit Euclidean (or Hilbert) ball. Written as a function of $w$, the loss at time $t$ is (up to an additive constant $\log(2)$) 
\begin{align}
	\label{eq:funcs}
	g_t(w) = -\ind{y_t=1}\log(1+\inner{w,x_t}) - \ind{y_t=0}\log(1-\inner{w,x_t}).
\end{align}
It is possible to estimate the sequential $\fat_\beta$ dimension of a unit Hilbert ball as $\fat_\beta = \mathcal{O}^*(1/\beta^2)$, where the $\mathcal{O}^*$ notation ignores logarithmic factors. Then Theorem~\ref{thm:biased_dudley} gives an upper bound of 
$$V_n(\F^\delta) = \mathcal{O}^*\left(n^{1/2}\delta^{-1}\right),$$ 
and thus
$$V_n(\F) = \mathcal{O}^*\left(n^{3/4}\right).$$ 
Below, we exhibit an algorithm that attains regret of $\mathcal{O}^*\left(n^{1/2}\right)$, implying that the upper bounds obtained with our technique are not always tight. 

\subsection{Algorithm: Regularization with Self-Concordant Barrier}

To develop an algorithm for the problem, we turn to the field of online convex optimization. We observe that functions $g_t$ defined in \eqref{eq:funcs} are convex, but not strongly convex. Moreover, the gradients of $g_t(w)$ are not bounded. We may consider a restricted set to mitigate the exploding gradient; however, a $\delta$-shrinkage of the ball $B_2$ still leaves the gradient to be of size $O(1/\delta)$. A direct gradient descent method will give the suboptimal  $O(n^{3/4})$ upper bound derived above in a non-constructive way. We also mention that while the functions are exp-concave, the upper bounds for the Online Newton Step method  \cite{hazan2007logarithmic} scale with the dimension of the space, which we assume to be large or infinite.

We now present an algorithm based on self-concordant barrier regularization, which appears to be of an independent interest. The algorithm answers the following question: \emph{can one obtain regret bounds for online convex optimization in terms of the maximum of function values rather than gradients?}

Consider the Follow-the-Regularized-Leader method
\begin{align}
	\label{eq:ftrl}
	w_{t+1} = \argmin{w\in B_2} \sum_{s=1}^t \inner{\nabla g_s(w_s), w} + \eta^{-1} R(w)
\end{align}
with the self-concordant barrier $R(w) = -\log(1-\|w\|^2)$. In accordance with the protocol of the probability assignment problem, we predict $\inner{w_{t},x_{t}}$ at round $t$ after observing $x_t$. It is shown in \cite{AbeRak09colt} that regret of \eqref{eq:ftrl} against any $w^*\in B_2$ is 
\begin{align}
	\label{eq:reg_bound_local}
	\sum_{t=1}^n g_t(w_t) - g_t(w^*) \leq 2\eta \sum_{t=1}^n \|\nabla g_t(w_t)\|_{w_t}^{*2} + \eta^{-1} R(w^*)
\end{align}
as long as $\eta$ satisfies $\eta \|\nabla g_t(w_t)\|^*_{w_t}\leq 1/4$. Here, the \emph{local norm} is defined as 
$$\|h\|^*_{w} = \sqrt{h^\tr (\nabla^2 R(w))^{-1} h}.$$
According to the lemma below, the local norm is bounded by a constant that is independent of the dimension: 
\begin{lemma}
	\label{eq:bdd_local_norm}
	For any $t$, the local norm of $\nabla g_t(w_t)$ is upper bounded by a constant:
		$$\|\nabla g_t(w_t)\|^*_{w_t} \leq 3.$$
\end{lemma}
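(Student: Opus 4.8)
The plan is to compute $\nabla g_t(w_t)$ explicitly, compute the Hessian $\nabla^2 R(w)$ of the barrier $R(w) = -\log(1-\|w\|^2)$, and then bound the local norm $\|\nabla g_t(w_t)\|^*_{w_t} = \sqrt{\nabla g_t(w_t)^\tr (\nabla^2 R(w_t))^{-1} \nabla g_t(w_t)}$ directly. The gradient is
\[
\nabla g_t(w) = -\ind{y_t=1}\frac{x_t}{1+\inner{w,x_t}} + \ind{y_t=0}\frac{x_t}{1-\inner{w,x_t}},
\]
so in either case $\nabla g_t(w_t) = c\, x_t$ for a scalar $c$ with $|c| = 1/|1\pm\inner{w_t,x_t}|$. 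Since $\|w_t\|<1$ (as $w_t$ lies in the interior of $B_2$, being the minimizer of an objective containing the barrier) and $\|x_t\|\le 1$, we have $|\inner{w_t,x_t}| < 1$, so $|c|$ is finite but a priori unbounded — this is exactly the point the lemma must overcome, and the reason the bound comes out dimension-free is that the local norm rescales by the inverse Hessian, which blows up in precisely the direction $x_t$ near the boundary.

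Next I would compute $\nabla^2 R(w)$. Writing $u = 1-\|w\|^2$, one gets
\[
\nabla^2 R(w) = \frac{2}{u} I + \frac{4}{u^2} w w^\tr,
\]
which is positive definite on the interior of $B_2$. Since only the component of $\nabla g_t(w_t)$ along $x_t$ matters and $\nabla^2 R \succeq \frac{2}{u} I$, a crude bound gives $\|\nabla g_t(w_t)\|^{*2}_{w_t} \le \frac{u}{2}\|\nabla g_t(w_t)\|^2 = \frac{u\, c^2}{2}\|x_t\|^2$. The remaining task is to show $u\, c^2 \|x_t\|^2 = O(1)$, i.e. that $(1-\|w_t\|^2)/(1\pm\inner{w_t,x_t})^2$ stays bounded. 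This is where the geometry enters: by Cauchy–Schwarz $|\inner{w_t,x_t}| \le \|w_t\|\,\|x_t\| \le \|w_t\|$, so $1 - |\inner{w_t,x_t}| \ge 1-\|w_t\|$, hence $(1\pm\inner{w_t,x_t})^2 \ge (1-\|w_t\|)^2$, and therefore
\[
\frac{1-\|w_t\|^2}{(1\pm\inner{w_t,x_t})^2} \le \frac{(1-\|w_t\|)(1+\|w_t\|)}{(1-\|w_t\|)^2} = \frac{1+\|w_t\|}{1-\|w_t\|},
\]
which is \emph{not} bounded — so the crude lower bound $\nabla^2 R \succeq \frac{2}{u}I$ is too lossy and I must instead use the full Hessian, keeping the rank-one term $\frac{4}{u^2}ww^\tr$.

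The main obstacle, then, is carrying out the inverse-Hessian quadratic form carefully rather than via the isotropic lower bound. By Sherman–Morrison,
\[
(\nabla^2 R(w))^{-1} = \frac{u}{2}\Bigl(I - \frac{2 w w^\tr}{u + 2\|w\|^2}\Bigr) = \frac{u}{2}I - \frac{u\, w w^\tr}{1+\|w\|^2},
\]
using $u + 2\|w\|^2 = 1+\|w\|^2$. Plugging $\nabla g_t(w_t) = c\,x_t$ in, the local norm squared equals $c^2\bigl(\tfrac{u}{2}\|x_t\|^2 - \tfrac{u}{1+\|w_t\|^2}\inner{w_t,x_t}^2\bigr)$. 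One then bounds this using $\|x_t\|\le 1$, $|\inner{w_t,x_t}|\le\|w_t\|$, and the explicit form of $c$ (distinguishing $c = 1/(1+\inner{w_t,x_t})$ versus $1/(1-\inner{w_t,x_t})$ and using $1\pm\inner{w_t,x_t}\ge 1-\|w_t\|$). The key cancellation is that as $\|w_t\|\to 1$ along the direction of $x_t$ — the only regime where $c$ explodes — the subtracted term $\tfrac{u}{1+\|w_t\|^2}\inner{w_t,x_t}^2$ nearly cancels $\tfrac{u}{2}\|x_t\|^2$, leaving a quantity of order $u^2 = (1-\|w_t\|^2)^2$, which exactly compensates the $c^2 \sim 1/(1-\|w_t\|)^2$ growth. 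A direct estimation (worst case $x_t$ parallel to $w_t$, $\|x_t\|=1$) then yields the constant, which after bookkeeping is at most $3$; I expect the arithmetic to confirm the stated bound with room to spare, so the only real care needed is in not throwing away the rank-one correction too early.
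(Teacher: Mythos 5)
Your plan is correct and follows essentially the same route as the paper: compute the gradient and the inverse barrier Hessian explicitly (your Sherman--Morrison formula agrees with the paper's coordinate-rotation computation), then bound the quadratic form by exploiting the rank-one correction, which produces exactly the cancellation along the $w_t$ direction that the paper uses in its case analysis on the sign of the component of $x_t$ along $w_t$. The only piece left implicit is the final arithmetic, but the worst case you identify ($x_t$ parallel to $w_t$, $\|x_t\|=1$, $\|w_t\|\to 1$) is the right one and the resulting constant is comfortably within the stated bound of $3$.
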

Together with \eqref{eq:reg_bound_local}, Lemma~\ref{eq:bdd_local_norm} implies a regret bound of 
$18\eta n + \eta^{-1} R(w^*).$
Instead of taking $w^*$ at the boundary of the ball where $R(w^*)$ is infinite, we can evaluate regret against $w=(1-1/n)w^*$. For such a comparator, $R(w)=\mathcal{O}(\log n)$. By choosing $\eta$ appropriately and using an argument similar to \eqref{eq:off_delta}, we conclude that  regret against any $w^*\in B_2$ is
upper bounded by 
$$C\sqrt{n\log n}.$$
Importantly, $C$ is an absolute constant that does not depend on the dimension of the problem. This rate is optimal up to polylogarithmic factors. The optimality follows from Lemma~\ref{lem:lower_bound} below and an estimate on sequential covering number of a Hilbert ball \cite{RakSri14a,RakSri15nonparametric}.
 \begin{lemma}
	For the linear class in \eqref{eq:lin_class}, 
$$V_n(\F) = \mathcal{\Theta}^*(n^{1/2}).$$
\end{lemma}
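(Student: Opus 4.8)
The lemma has two directions, proved by unrelated means: the upper bound $V_n(\F)=\mathcal{O}^*(n^{1/2})$ comes from the self‑concordant‑barrier algorithm of Section~\ref{sec:linear}, and the lower bound $V_n(\F)=\mathcal{\Omega}^*(n^{1/2})$ from a direct adversary argument.

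For the upper bound the plan is to finish assembling the pieces already laid out before the lemma. Run the Follow‑the‑Regularized‑Leader iteration \eqref{eq:ftrl} with $R(w)=-\log(1-\|w\|^2)$ and predict $\pred_t=(\inner{w_t,x_t}+1)/2$, so that $\loss(\pred_t,y_t)=g_t(w_t)+\log 2$ and the additive constant cancels in the regret, leaving $\sum_t g_t(w_t)-\inf_{w}\sum_t g_t(w)$. Substituting Lemma~\ref{eq:bdd_local_norm} ($\|\nabla g_t(w_t)\|^*_{w_t}\le 3$) into the guarantee \eqref{eq:reg_bound_local}---legitimate once $\eta\le 1/12$---gives regret at most $18\eta n+\eta^{-1}R(w)$ against any interior $w$. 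Taking $w=(1-1/n)w^*$ makes $R(w)=\mathcal{O}(\log n)$, and optimizing $\eta\asymp\sqrt{(\log n)/n}$ yields $\mathcal{O}(\sqrt{n\log n})$ regret against $(1-1/n)w^*$. The one remaining step is to pass from $(1-1/n)w^*$ to $w^*$: a one‑line estimate shows $g_t((1-1/n)w^*)-g_t(w^*)\le 1/(n-1)$ for every $t$ (when $\inner{w^*,x_t}\ge 0$ bound $\log\frac{1+\inner{w^*,x_t}}{1+(1-1/n)\inner{w^*,x_t}}\le\log\frac{n}{n-1}$; the opposite sign and the $y_t=0$ branch can only help, and the boundary case $\inner{w^*,x_t}=\pm 1$ makes the comparator infinitely costly), so the shrinkage adds only $\mathcal{O}(1)$. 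Since the guarantee is sequence‑wise, $V_n(\F)\le\mathcal{O}(\sqrt{n\log n})=\mathcal{O}^*(n^{1/2})$.

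For the lower bound the plan is to exhibit an explicit adversary exploiting the large (or infinite) ambient dimension, which is essential: in a fixed dimension the rate is only polylogarithmic in $n$, so the construction must spread the hard directions orthogonally. Assume $\dim\X\ge n$ (automatic in the Hilbert case) and let the adversary play $x_t=e_t$, the $t$‑th coordinate vector, and $p_t=\tfrac12$ at every round. Restrict the comparator class to $\F'=\{f_w:\ w=\tfrac1{\sqrt n}\sum_{s=1}^n\sigma_s e_s,\ \sigma\in\{\pm1\}^n\}\subseteq\F$, each such $w$ having unit norm; minimax regret is monotone in the comparator class, so $V_n(\F)\ge V_n(\F')$. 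Because $\F'$ decouples across rounds on this diagonal tree,
\[ \inf_{f\in\F'}\sum_{t=1}^n\loss(f(e_t),y_t)=\sum_{t=1}^n\inf_{\sigma_t\in\{\pm1\}}\loss\!\Big(\tfrac12+\tfrac{\sigma_t}{2\sqrt n},\,y_t\Big)=n\Big(\log 2-\log\big(1+\tfrac1{\sqrt n}\big)\Big), \]
a deterministic quantity. On the other side, since $y_t\sim\mathrm{Bernoulli}(1/2)$ is independent of the history, any forecaster satisfies $\En[\loss(\pred_t,y_t)\mid y_{1:t-1}]=-\tfrac12\log(\pred_t(1-\pred_t))\ge\log 2$, so its expected cumulative loss is at least $n\log 2$. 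Subtracting, $V_n(\F)\ge n\log(1+1/\sqrt n)\ge\tfrac12\sqrt n$, using $\log(1+u)\ge u/2$ for $u\in[0,1]$. Together with the upper bound, $V_n(\F)=\mathcal{\Theta}^*(n^{1/2})$. (One can instead invoke Lemma~\ref{lem:lower_bound} with the estimate $\fat_\beta(\F)=\mathcal{\Omega}^*(\beta^{-2})$ for linear functionals on a Hilbert ball \cite{RakSri14a,RakSri15nonparametric}, optimized at scale $\beta\asymp 1/\sqrt n$.)

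The crux is the lower bound: one must realize that a dimension‑free $\sqrt n$ rate cannot be forced in low dimension, and therefore build the adversary around $n$ mutually orthogonal side‑information vectors together with a comparator sub‑family that separates over time; once that structure is in place, the log‑loss arithmetic is elementary and, notably, bypasses the truncation device used elsewhere in the paper. On the upper‑bound side the only non‑routine ingredient---the dimension‑free local‑norm bound of Lemma~\ref{eq:bdd_local_norm}---is already granted, leaving only the step‑size tuning and the $\mathcal{O}(1)$ shrinkage bookkeeping.
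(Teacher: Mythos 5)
Your proof is correct. The upper-bound half is exactly the paper's argument: FTRL with the self-concordant barrier, the dimension-free local-norm bound of Lemma~\ref{eq:bdd_local_norm}, the shrunk comparator $(1-1/n)w^*$, and $\eta \asymp \sqrt{(\log n)/n}$; your explicit $1/(n-1)$ per-round shrinkage estimate merely fills in what the paper dispatches with ``an argument similar to \eqref{eq:off_delta}''. The lower-bound half is where you genuinely diverge. The paper obtains $\Omega^*(\sqrt{n})$ by invoking Lemma~\ref{lem:lower_bound} (the offset sequential Rademacher lower bound) together with known estimates of the sequential complexity of a Hilbert ball from \cite{RakSri14a,RakSri15nonparametric}; concretely one would take $\x_t(y)=e_t$ there and play $w=\epsilon/\sqrt{n}$ against the offset term $4(\log n)(f-1/2)^2$. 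You instead run a self-contained adversary: orthogonal side information $x_t=e_t$, fair coins $p_t=1/2$, and the sign-cube sub-family $w=n^{-1/2}\sum_s\sigma_s e_s$, for which the comparator infimum decouples across rounds and equals $n(\log 2-\log(1+n^{-1/2}))$ deterministically, while any forecaster pays at least $\log 2$ per round in expectation against fair coins. Both routes are sound, both correctly exploit that the ambient dimension is at least $n$ (without which the claimed rate would fail), and your arithmetic checks out, including the measurability of $\pred_t$ with respect to $y_{1:t-1}$ needed for the per-round entropy bound. Your version is more elementary and bypasses the truncation and Taylor-remainder bookkeeping inside Lemma~\ref{lem:lower_bound} as well as the imported fat-shattering estimate; the paper's version is the one that generalizes to arbitrary classes, which is presumably why the authors route the optimality claim through it.
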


The proof of Lemma~\ref{eq:bdd_local_norm} relied heavily on the ability to calculate the gradient of the loss function and match it to the inverse Hessian of the self-concordant barrier. We now give an alternative proof based on a simple and charming, yet unexpected lemma due to Nesterov (see Appendix for the short proof):
\begin{lemma}[Lemma 4 in \cite{nesterov2011barrier}]
	\label{lem:nesterov}
	Let $\psi$ be concave and positive on $\text{int}~ \K$. Then for any $x\in\text{int}~ \K$ we have
	$$\|\nabla\psi(x)\|_x^* \leq \psi(x).$$
\end{lemma}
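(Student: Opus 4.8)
The argument rests on a single standard property of self-concordant barriers, and nothing else. Write $\|h\|_x = \sqrt{h^\tr \nabla^2 R(x)\, h}$ for the primal local norm induced by the self-concordant barrier $R$ of $\K$ (the norm dual to the $\|\cdot\|^*_x$ appearing in the statement), where on $\text{int}~\K$ the Hessian $\nabla^2 R(x)$ is positive definite. The \emph{Dikin ellipsoid} property states that the open ellipsoid $\{y : \|y - x\|_x < 1\}$ is contained in $\text{int}~\K$ for every $x \in \text{int}~\K$; this is classical (Nesterov--Nemirovski), holds for any nondegenerate self-concordant function, and in particular for the ball barrier $R(w) = -\log(1-\|w\|^2)$ used in the paper. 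This is the only external fact I would invoke.

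First I would dispose of the trivial case $\nabla\psi(x) = 0$, in which the claim is simply $0 \le \psi(x)$, true by positivity. Otherwise set $g = \nabla\psi(x)$ and take the direction $h^* = -(\nabla^2 R(x))^{-1} g / \|g\|^*_x$, which satisfies $\|h^*\|_x = 1$ and $\inner{g, h^*} = -\|g\|^*_x$; that is, $h^*$ is the unit-local-norm direction realizing $\|g\|^*_x = \max_{\|h\|_x \le 1}\inner{g,h}$. Then, for $\epsilon \in (0,1)$, put $x_\epsilon = x + (1-\epsilon)h^*$. Since $\|x_\epsilon - x\|_x = 1-\epsilon < 1$, the Dikin ellipsoid property gives $x_\epsilon \in \text{int}~\K$, so $\psi(x_\epsilon) > 0$. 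Concavity of $\psi$ yields $\psi(x_\epsilon) \le \psi(x) + (1-\epsilon)\inner{g, h^*} = \psi(x) - (1-\epsilon)\|g\|^*_x$. Chaining the two inequalities gives $(1-\epsilon)\|\nabla\psi(x)\|^*_x < \psi(x)$ for every $\epsilon \in (0,1)$, and letting $\epsilon \downarrow 0$ produces $\|\nabla\psi(x)\|^*_x \le \psi(x)$.

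The only point that needs care is that the extremal direction $h^*$ sits on the \emph{boundary} of the Dikin ellipsoid, where one is not entitled to evaluate $\psi$ (positivity is asserted only on $\text{int}~\K$); the $\epsilon$-shrinkage is exactly the device that keeps the probe point strictly inside $\K$, and the limit then recovers the sharp constant $1$ with no loss. I expect no real obstacle here — this is why the lemma admits such a short proof — the content is entirely in recognizing that "concave, positive, gradient bounded in local norm" is just first-order concavity evaluated against the Dikin ellipsoid.
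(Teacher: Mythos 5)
Your proof is correct and is essentially identical to the paper's (which reproduces Nesterov's argument): the probe point $x_\epsilon = x - (1-\epsilon)(\nabla^2 R(x))^{-1}\nabla\psi(x)/\|\nabla\psi(x)\|_x^*$ is exactly the paper's $y$ with $r = 1-\epsilon$, and both arguments combine the Dikin ellipsoid inclusion, positivity, and the first-order concavity inequality. Your explicit handling of the $\nabla\psi(x)=0$ case is a minor tidiness the paper omits.
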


The lemma allows us to upper bound regret in an online convex optimization problem if we only know that the values of the functions (and not the gradients) are bounded. Consider the FTRL algorithm \eqref{eq:ftrl}, but over the shrunk ball $(1-1/n)B_2$. Suppose we can ensure $0< g_t < A$. Then $A-g_t$ is concave and positive. Hence, by above lemma
$$\|\nabla g_t(w_t)\|^*_{w_t} = \|\nabla (A-g_t(w_t))\|^*_{w_t} \leq A-g_t(w_t) \leq A$$
which provides an alternative to the bound of Lemma~\ref{eq:bdd_local_norm}.
Regret is then upper bounded by 
$$\sum_{t=1}^n g_t(w_t) - \sum_{t=1}^n g_t(w^*) \leq 2\eta n A^2 + \eta^{-1} R(w^*) $$
Crucially, by employing self-concordant regularization, we avoid paying for a large gradient of cost functions at the boundary of the set. Over the shrunk set $(1-1/n)B_2$, we ensure that the values of functions $g_t$ are upper bounded by $A=O(\log n)$ even if the gradients blow up linearly with $n$. This surprising observations leads to a dimension-independent $O(\sqrt{n}\log n)$ regret bound for the Euclidean ball, and can also be used for other convex bodies and non-logarithmic loss functions when the closed-form analysis of Lemma~\ref{eq:bdd_local_norm} is not available.

\section{A Lower Bound}
\label{sec:lower}

In this section, we show that the offset sequential Rademacher complexity serves as a lower bound on the minimax regret. Hence, the complexities of the class $\F$ of experts are intrinsic to the problem. We refer to \cite{RakSri15nonparametric} for further lower bounds on the offset Rademacher complexity via the scale-sensitive dimension and sequential covering numbers.

\begin{lemma}
	\label{lem:lower_bound}
	The following lower bound holds:
	\begin{align*}
		V_n(\F) +1 \geq \sup_{\x} \En_{y} \left[
				\sup_{f\in\F^{1/n}}
					\left\{
						\sum_{t=1}^n  2 (2y_t-1)(f(\x_t(y))-1/2)  - 4(\log n)(f(\x_t(y))-1/2)^2  
					\right\}
				\right]
	\end{align*}
	where $y_1,\ldots,y_n$ are independent with distribution ${\sf Bernoulli(1/2)}$ and the supremum is taken over consistent trees with respect to constraints $\X_t$.
\end{lemma}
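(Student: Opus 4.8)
The plan is to lower-bound $V_n(\F)$ by committing Nature to a simple non-adaptive strategy --- a consistent tree for the side information and fair coin flips for the labels --- and then to recognize the resulting quantity as an offset sequential complexity.

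\emph{Reducing to an oblivious adversary.} Fix a tree $\x$ consistent with the constraints $\X_t$ and let Nature play $x_t = \x_t(y_{1:t-1})$ on round $t$ (legitimate, since $y_{1:t-1}$ is revealed before $x_t$ is chosen) and $y_t \sim \text{Bernoulli}(1/2)$, independently across $t$. Because Nature's randomization is oblivious to the forecaster, the value of this restricted game is the smallest expected regret over deterministic forecaster strategies. The comparator term does not involve the forecaster's predictions, so it separates; and since the logarithmic loss is proper, $\inf_q \En_{y\sim\text{Bernoulli}(1/2)}\loss(q,y) = \log 2$ (attained at $q = 1/2$), so the forecaster's contribution is exactly $n\log 2$ regardless of how it uses the revealed history. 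Taking a supremum over consistent trees,
\begin{align*}
V_n(\F) \;\ge\; \sup_{\x}\, \En_y\, \sup_{f\in\F}\Bigl[\, n\log 2 - \sum_{t=1}^n \loss(f(\x_t(y)), y_t)\, \Bigr].
\end{align*}

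\emph{From losses to the quadratic offset.} The elementary identity $\log 2 - \loss(p,y) = \log\bigl(1+(2y-1)(2p-1)\bigr)$ turns the bracket into $\sum_t \log\bigl(1+(2y_t-1)(2f(\x_t(y))-1)\bigr)$. For a prediction in $[\tfrac1n, 1-\tfrac1n]$ one has $(2y_t-1)(2p-1)\in[-1+\tfrac2n,\,1]$, and on that range $\log(1+x) \ge x - (\log n)\,x^2$; this holds because $x\mapsto\log(1+x)-x+(\log n)x^2$ vanishes at $x=0$ and is nonnegative on $[-1+\tfrac2n,1]$, the left endpoint being the binding constraint --- which is precisely why the truncation scale is taken to be $\delta = 1/n$. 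Writing $x = 2(2y_t-1)(p-\tfrac12)$ so that $x^2 = 4(p-\tfrac12)^2$, the inequality reproduces the summand $2(2y_t-1)(p-\tfrac12) - 4(\log n)(p-\tfrac12)^2$ of the lemma. Applied to the thresholded predictions $g(\x_t(y))$ with $g\in\F^{1/n}$, this is what converts the loss-based lower bound into the offset complexity on the right-hand side.

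\emph{Accounting for truncation.} The remaining work is to pass from the supremum over $\F$ to the supremum over $\F^{1/n}$ (so that all predictions lie in $[\tfrac1n,1-\tfrac1n]$ and the scalar inequality applies) at the cost of only an additive $1$. The mechanism is that thresholding a prediction towards $1/2$ never increases a single-round logarithmic loss by more than $-\log(1-\tfrac1n)\le\tfrac1{n-1}$, so over $n$ rounds the comparator shifts by at most $-n\log(1-\tfrac1n)$, a quantity bounded by a small absolute constant (and $\to 1$), whereas on rounds where thresholding helps it only improves the bound. I expect this step to be the main obstacle: one must control the comparison uniformly over $f\in\F$, including for experts that attain the extreme values $0$ or $1$ on nodes of the tree visited by the fair-coin path --- precisely the situation in which the oblivious-adversary bound of Step~1 is weak, because there $\inf_{f\in\F}\sum_t\loss(f(\x_t(y)),y_t)$ can be large or infinite --- and it is exactly the passage to the uniformly bounded class $\F^{1/n}$, together with the resulting additive $1$, that repairs it; this is why the statement pairs the raw class $\F$ inside $V_n$ with the thresholded class $\F^{1/n}$ inside the offset complexity and why the constant on the left is $1$. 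Carrying out this accounting and chaining it with Steps~1--2 yields the claimed inequality, with $y$ i.i.d.\ $\text{Bernoulli}(1/2)$ and the supremum over consistent trees.
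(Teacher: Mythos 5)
Your Steps 1 and 2 are correct and coincide with the paper's argument: the paper likewise fixes a consistent tree, sets $p_t=1/2$, uses the self-information (properness) property to evaluate the forecaster's contribution as $n\log 2$, and then applies exactly the inequality $\log(1+x)\ge x-(\log n)x^2$ on $[-1+2/n,1]$ to produce the offset quadratic. The gap is in Step 3, and it is not a constant-chasing issue. To chain Steps 1--2 into the claimed bound you need, pointwise along the fair-coin path,
\[
\sup_{f\in\F}\Bigl[n\log 2-\sum_{t=1}^n\loss(f(\x_t(y)),y_t)\Bigr]\;\ge\;\sup_{g\in\F^{1/n}}\Bigl[n\log 2-\sum_{t=1}^n\loss(g(\x_t(y)),y_t)\Bigr]-1,
\]
i.e. $\inf_{f\in\F}\sum_t\loss(f(\x_t(y)),y_t)\le\inf_{g\in\F^{1/n}}\sum_t\loss(g(\x_t(y)),y_t)+1$. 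The per-round fact you invoke --- thresholding increases a single-round loss by at most $-\log(1-1/n)$ --- proves the \emph{reverse} comparison $\inf_{\F^{1/n}}L\le\inf_{\F}L+O(1)$, and the rounds ``where thresholding helps'' push your needed direction the wrong way, not the right way: if $f$ is confident and wrong on even one visited node, $\loss(f(\x_t(y)),y_t)$ exceeds $\loss(\tau_{1/n}(f)(\x_t(y)),y_t)$ by an unbounded amount. Concretely, for $\F=\{f\equiv 1-\epsilon\}$ with $\epsilon\to 0$, the left-hand infimum is of order $\tfrac n2\log(1/\epsilon)$ on a typical fair-coin path while the right-hand one is of order $\tfrac n2\log n$, so the inequality you need fails by an arbitrarily large margin (the lemma itself survives only because its right-hand side is then very negative).

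The paper sidesteps this by performing the truncation \emph{before} committing Nature to the fair coin: it first asserts $V_n(\F^{1/n})\le V_n(\F)+1$ (a comparison of minimax game values, not of comparator losses on a fixed path), and then runs your Steps 1--2 essentially verbatim for the game against $\F^{1/n}$, where the comparator class is already uniformly bounded away from $\{0,1\}$ and the scalar inequality applies directly. If you reorganize your proof this way --- lower-bound $V_n(\F^{1/n})$ by the offset complexity exactly as in your Steps 1--2 with $\F^{1/n}$ in place of $\F$, and separately justify $V_n(\F^{1/n})\le V_n(\F)+1$ --- you recover the paper's argument. Be aware that this last comparison of game values is itself the delicate point (the paper's one-line justification again bounds the cost of thresholding rather than of un-thresholding), which is all the more reason it cannot be dispatched by the pointwise per-round accounting you propose.
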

\begin{proof}[\textbf{Proof of Lemma~\ref{lem:lower_bound}}]
To prove the lower bound, we proceed as in \cite{RakSri15nonparametric}. First, we observe that \eqref{eq:off_delta} holds in the other direction too:
	\begin{align}
		\label{eq:off_delta_opposite}
		V_n(\F^\delta)\leq V_n(\F)+n\delta.
	\end{align}
To see this, note that any $f$ only loses from thresholding when either $f(x_t)>1-\delta$ and $y_t=1$, or when $f(x_t)<\delta$ and $y_t=0$. In both cases, the difference in logarithmic loss is at most $-\log(1-\delta)\leq \delta$ for $\delta<1/2$. For the purposes of a lower bound, we take $\delta = 1/n$ and turn to lower-bounding $V_n(\F^{1/n})$.

As in the development leading to \eqref{eq:minimax_swapped} in the proof of the upper bound, the minimax value $V_n(\F^{1/n})$ is \emph{equal} to 
	\begin{align}
		&\multiminimax{\sup_{x_t}\sup_{p_t\in[0,1]} \En_{y_t}}_{t=1}^n 
			\left[
				\sup_{f\in\F^{1/n}}
					\left\{
						\sum_{t=1}^n \inf_{\pred_t\in[0,1]} \En_{y_t}\left[\loss(\pred_t,y_t)\right]  - \sum_{t=1}^n \loss(f(x_t),y_t)  
					\right\}
				\right]  
	\end{align}
which, by the self-information property of the loss equal to 
	\begin{align}
		&\multiminimax{\sup_{x_t}\sup_{p_t\in[0,1]} \En_{y_t}}_{t=1}^n 
			\left[
				\sup_{f\in\F^{1/n}}
					\left\{
						\sum_{t=1}^n \En_{y_t}\left[\loss(p_t,y_t)\right]  - \sum_{t=1}^n \loss(f(x_t),y_t)  
					\right\}
				\right]  
	\end{align}
By the linearity of expectation (and since the terms $\En_{y_t}\left[\loss(p_t,y_t)\right]$ do not involve $f$), we have
	\begin{align}
		\label{eq:lowerbd}
		V_n(\F^{1/n}) = &\multiminimax{\sup_{x_t}\sup_{p_t\in[0,1]} \En_{y_t}}_{t=1}^n 
			\left[
				\sup_{f\in\F^{1/n}}
					\left\{
						\sum_{t=1}^n \loss(p_t,y_t)  - \sum_{t=1}^n \loss(f(x_t),y_t)  
					\right\}
				\right] .
	\end{align}
	We now pass to the first lower bound by choosing $p_t=1/2$ for all $t$. 
	
	Consider the case $y_t=1$ and expand the loss function around $p_t=1/2$ for $z\in[1/n,1]$:
	\begin{align}
		\label{eq:remainder}
		\loss(1/2,1)  - \loss(z,1) = -\log(1/2) - (-\log(z)) = 2(z-1/2) - R(z)
	\end{align}
	where $R(z)$ is the remainder. We claim that the remainder can be upper bounded by a quadratic over the interval $[1/n, 1]$. To this end, consider the function
	$$g(z) = -2z + (1+\log(2))+4(\log n)(z-1/2)^2$$
	and note that the derivative and the value of this function at $1/2$ coincide with the derivative and the value of $-\log(z)$ at the same point. We claim that $g(z)$ dominates $-\log(z)$ on $[1/n,1]$. For $z>1/2$, this follows from $g'>(-\log)'$. The same argument holds for the interval $[1/\log (n), 1/2]$. Now, at $z=1/n$, $g(z)> -\log(z)$ and $|g'(z)|<|\log(z)'|$. The derivative relation continues to hold on the interval $[1/n,c/\log (n)]$ for large enough $c$, establishing $g>-\log$ on this interval too. The remaining interval $[c/\log(n), 1/\log(n)]$ is easily checked by the direct computation of function value. In sum, 
the remainder in \eqref{eq:remainder} can be upper bounded by $R(z)\leq 4(\log n)(z-1/2)^2$.

The case of $y_t=0$ is exactly analogous, and we obtain
\begin{align}
	\loss(p_t,y_t)  - \sum_{t=1}^n \loss(f(x_t),y_t)  &\geq  2\left[ \ind{y_t=1}(f(x_t)-1/2) + \ind{y_t=0}(-f(x_t)+1/2) \right] - 4(\log n)(f(x_t)-1/2)^2 \\
	&= 2\left[ y_t(f(x_t)-1/2) + (1-y_t)(-f(x_t)+1/2) \right] - 4(\log n)(f(x_t)-1/2)^2 \\
	&= 2 (2y_t-1)(f(x_t)-1/2)  - 4(\log n)(f(x_t)-1/2)^2. 
\end{align}
The lower bound in \eqref{eq:lowerbd} then becomes
	\begin{align}
		\label{eq:lowerbd}
		V_n(\F^{1/n}) &\geq \multiminimax{\sup_{x_t\in\X_t(x_{1:t-1},y_{1:t-1})} \En_{y_t}}_{t=1}^n 
			\left[
				\sup_{f\in\F^{1/n}}
					\left\{
						\sum_{t=1}^n  2 (2y_t-1)(f(x_t)-1/2)  - 4(\log n)(f(x_t)-1/2)^2  
					\right\}
				\right]\\
			&= \sup_{\x} \En_{y} \left[
				\sup_{f\in\F^{1/n}}
					\left\{
						\sum_{t=1}^n  2 (2y_t-1)(f(\x_t(y))-1/2)  - 4(\log n)(f(\x_t(y))-1/2)^2  
					\right\}
				\right]
	\end{align}
	where $y_1,\ldots,y_n$ are independent with distribution ${\sf Bernoulli(1/2)}$. 
	
\end{proof}

\section{Discussion and Open Questions}

At the very first step, the analysis in this paper thresholds the class $\F$ to avoid dealing with the exploding gradient of the loss function. The authors believe that this ``hard thresholding'' approach is the source of sub-optimality, and that ``smooth'' approaches should be possible. When the class of functions has a specific structure, such as in the example of Section~\ref{sec:linear}, the exploding gradient can be mitigated in a ``smooth way'' by a regularization technique. It is not clear to the authors how to perform the ``smooth thresholding'' analysis when such a structure is not available.

Another interesting venue of investigation is the development of algorithms. It has been shown that the minimax analysis, of the type performed in this paper, can be made constructive \cite{rakhlin2012relax,RakSri15nonparametric,StatNotes2012}. It appears that the relaxation approach may yield new (and possibly computationally efficient) methods for sequential probability assignment and data compression.

\appendix

\section{Proofs}

	\begin{proof}[\textbf{Proof of Theorem~\ref{thm:value_upper}}]
	Let us use the shorthand $\cD = [\delta,1-\delta]$. The value $V_n(\F^\delta)$ can be upper bounded by 
	\begin{align}
		\label{eq:bd1}
\multiminimax{\sup_{x_t}\inf_{\pred_t\in\cD}\sup_{p_t\in[0,1]}\En_{y_t \sim p_t}}_{t=1}^n \left\{\sum_{t=1}^n \loss(\pred_t,y_t) - \inf_{f\in\F^\delta}\sum_{t=1}^n \loss(f(x_t),y_t)  \right\} 
	\end{align}
	simply because each infimum is taken over a smaller set. Henceforth, it will be understood that $x_t$ ranges over $\X_t(x_{1:t-1},y_{1:t-1})$. The expression in \eqref{eq:bd1} is equal to
	\begin{align}
		&\multiminimax{\sup_{x_t}\sup_{p_t\in[0,1]} \En_{y_t}}_{t=1}^n 
			\left\{
				\sum_{t=1}^n \inf_{\pred_t\in\cD}  \En_{y_t}\left[\loss(\pred_t,y_t) \right] - \inf_{f\in\F^\delta}\sum_{t=1}^n \loss(f(x_t),y_t)  
			\right\} 
	\end{align}
	by an argument that can be found in \cite{AbeAgaBarRak09,RakSriTew10}. Here, it is understood that $y_t$ is a Bernoulli random variable with distribution $p_t$. Taking the infimum outside the negative sign, the above quantity is equal to
	\begin{align}
		\label{eq:minimax_swapped}
		&\multiminimax{\sup_{x_t}\sup_{p_t\in[0,1]} \En_{y_t}}_{t=1}^n 
			\left[
				\sup_{f\in\F^\delta}
					\left\{
						\sum_{t=1}^n \inf_{\pred_t\in\cD} \En_{y_t}\left[\loss(\pred_t,y_t)\right]  - \sum_{t=1}^n \loss(f(x_t),y_t)  
					\right\}
				\right]  
	\end{align}
	We now claim that each infimum in \eqref{eq:minimax_swapped} is achieved at $\pred_t=\thresh{p_t}$. Indeed, this follows because the unconstrained minimizer over $[0,1]$ is $p_t$ by the well-known property of entropy:
	$$\argmin{\pred_t\in [0,1]} \En_{y_t}\left[\loss(\pred_t,y_t)\right] = \argmin{\pred_t\in [0,1]} \Big\{ -p_t\log(\pred_t) - (1-p_t)\log(1-\pred_t) \Big\}= p_t.$$
	We conclude that \eqref{eq:minimax_swapped} is equal to 
	\begin{align}
		\label{eq:minimax_swapped2}
		&\multiminimax{\sup_{x_t}\sup_{p_t\in[0,1]} \En_{y_t}}_{t=1}^n 
			\left[
				\sup_{f\in\F^\delta}
					\left\{
						\sum_{t=1}^n \En_{y_t}\left[\loss(\thresh{p_t},y_t)\right]  - \sum_{t=1}^n \loss(f(x_t),y_t)  
					\right\}
				\right].
	\end{align}
	Now, the terms in the first sum do not depend on $f\in\F$, and thus can pass through the multiple infima and suprema. By linearity of expectation, \eqref{eq:minimax_swapped2} is equal to
	\begin{align}
		\label{eq:minimax_swapped3}
		&\multiminimax{\sup_{x_t}\sup_{p_t\in[0,1]} \En_{y_t}}_{t=1}^n 
			\left[
				\sup_{f\in\F^\delta}
					\left\{
						\sum_{t=1}^n \loss(\thresh{p_t},y_t) - \sum_{t=1}^n \loss(f(x_t),y_t)  
					\right\}
				\right]
	\end{align}
	We now separately deal with the case that $p_t\notin\cD$. To this end, observe that
	\begin{align*}
		\ind{p_t<\delta}(\loss(\thresh{p_t},y_t) - \loss(f(x_t),y_t)) &= \ind{p_t<\delta,y_t=0}(\loss(\delta,y_t) - \loss(f(x_t),y_t)) \\
		&+ \ind{p_t<\delta,y_t=1}(\loss(\delta,y_t) - \loss(f(x_t),y_t))\\
		&\leq \ind{p_t<\delta,y_t=1}(\loss(\delta,1) - \loss(f(x_t),1))\\
		&\leq -\ind{p_t<\delta,y_t=1}\log\delta
	\end{align*}
	The first inequality is obtained by dropping the non-positive term. Indeed,  $p_t<\delta$ gives higher odds to the outcome $y_t=0$ than $f(x_t)\geq \delta$. Positivity of $\loss$ gives the second inequality. A similar calculation gives
	$$\ind{p_t>1-\delta}(\loss([p_t],y_t) - \loss(f(x_t),y_t)) \leq -\ind{p_t>1-\delta, y_t=0}\log\delta$$
	Substituting into \eqref{eq:minimax_swapped3}, we obtain an upper bound of
	\begin{align}
		\multiminimax{\sup_{x_t}\sup_{p_t\in[0,1]} \En_{y_t}}_{t=1}^n 
		\left[
			\sup_{f\in\F^\delta}
				\left\{
					\sum_{t=1}^n \ind{p_t\in\cD}(\loss(\thresh{p_t},y_t) - \loss(f(x_t),y_t))   -\ind{p_t<\delta,y_t=1}\log\delta -\ind{p_t>1-\delta,y_t=0}\log\delta
				\right\}
			\right] \notag		
	\end{align}
	Since $$\En_{y_t\sim p_t} \ind{p_t<\delta,y_t=1}\log(1/\delta) \leq \delta\log(1/\delta),$$
	and since $\ind{p_t\in\cD}(\loss(\thresh{p_t},y_t)=\ind{p_t\in\cD}(\loss(p_t,y_t)$, we conclude that the minimax value $V_n(\F^\delta)$ is upper bounded by
	\begin{align}
		\label{eq:minimax_swapped4}
		&\multiminimax{\sup_{x_t}\sup_{p_t\in[0,1]} \En_{y_t}}_{t=1}^n 
		\left[
			\sup_{f\in\F^\delta}
				\left\{
					\sum_{t=1}^n \ind{p_t\in\cD}(\loss(p_t,y_t) - \loss(f(x_t),y_t))   
				\right\}
			\right] + 2n\delta\log(1/\delta).		
	\end{align}
	We now linearize the terms $\loss(p_t,y_t) - \loss(f(x_t),y_t)$. The derivative of $\loss(\cdot,y_t)$ at $p_t$ is
	$$\loss'(p_t,y_t) = -\ind{y_t=1}\frac{1}{p_t} + \ind{y_t=0}\frac{1}{1-p_t}.$$
	Observe that the second derivative $\loss''(\cdot,y_t) \geq 1$, and hence $\loss(\cdot,y_t)$ is strongly convex, for either value of $y_t$. Strong convexity implies that
	$$\loss(p_t,y_t) - \loss(f(x_t),y_t)\leq \loss'(p_t,y_t) \cdot (p_t-f(x_t)) - \frac{1}{2}(p_t-f(x_t))^2$$	
	and thus \eqref{eq:minimax_swapped4} is upper bounded by 
	\begin{align}
		\label{eq:uppbd1}
		 &\multiminimax{\sup_{x_t}\sup_{p_t\in[0,1]} \En_{y_t}}_{t=1}^n\left[\sup_{f\in\F^\delta} \sum_{t: {p_t\in\cD}}  \loss'(p_t,y_t) \cdot \left( p_t - f(x_t) \right) - \frac{1}{2}(p_t - f(x_t))^2 \right]  + 2n\delta\log(1/\delta).
	 \end{align}
	Observe that the derivatives are mean-zero:
	\begin{align}
		\En_{y_t\sim p_t} \loss'(p_t,y_t) = \En\left[-\ind{y_t=1}\frac{1}{p_t} + \ind{y_t=0}\frac{1}{1-p_t}\right] = 0,
	\end{align}
	which suggests that we can symmetrize these terms as in \cite{RakSri14a,RakSri15nonparametric}. The key observation is that tighter control on the supremum over $\F$ will be obtained if we keep the derivatives to have a non-uniform distribution given by $p_t$.
	
	Let us drop the term $2n\delta\log(1/\delta)$ in \eqref{eq:uppbd1} and concentrate on the first term. Consider the following upper bound:
	\begin{align}
		\label{eq:uppbd2}
		 &\multiminimax{\sup_{x_t}\sup_{p_t\in[0,1]} \En_{y_t}}_{t=1}^n\left[\sup_{f\in\F^\delta} \sum_{t: {p_t\in\cD}}  \loss'(p_t,y_t) \cdot \left( p_t - f(x_t) \right) - \frac{1}{2}(p_t - f(x_t))^2 \right]  \notag\\
		 &\leq \multiminimax{\sup_{x_t}\sup_{p_t,p_t'\in[0,1]} \En_{y_t\sim p_t'}}_{t=1}^n\left[\sup_{f\in\F^\delta} \sum_{t: {p_t\in\cD}}  \loss'(p_t,y_t) \cdot \left( p_t' - f(x_t) \right) - \frac{1}{2}(p_t - f(x_t))^2 \right]  
	 \end{align}
	This upper bound holds because the supremum allows the choice $p_t=p'_t$ in addition to distinct choices for the two distributions.
	
	We now pass to the tree notation. Observe that the optimal choice of $x_t,p_t,p'_t$ depends on $(y_1,\ldots,y_{t-1})\in\{0,1\}^{t-1}$. In the functional form, let $\x$ be a sequence of mappings $\x_1,\ldots,\x_n$ with the consistency property $\x_t(y_1,\ldots,y_{t-1})\in \X_t(\x_1(y),\ldots,\x_{t-1}(y),y_{1:t-1})$ for all $y_{1:t-1}$. Similarly, let $\bmu$ and $\p$ be sequences of mappings with $\bmu_t,\p_t:\{0,1\}^{t-1}\to[0,1]$. With the same reasoning as in \cite{RakSriTew10}, we can write \eqref{eq:uppbd2} as
\begin{align*}
&\sup_{\x, \bmu, \p}\En \sup_{f \in \F^\delta} \left[ \sum_{t: {\p_t(y)\in\cD}} \loss'(\p_t(y),y_t) \left(\bmu_t(y) - f(\x_t(y))\right)  - \frac{1}{2}\left(\bmu_t(y) - f(\x_t(y))\right)^2\right]
\end{align*}
where $y_t$'s in $\{0,1\}$ are drawn from $\p$. More specifically, $y_1 \sim \p_1$ and subsequently $y_t \sim \p_t(y_{1:t-1})$. 
\end{proof}

\begin{proof}[\textbf{Proof of Lemma~\ref{eq:finite_biased}}]
\begin{align}
	\label{eq:softmax}
\En \sup_{\v \in V} \left[ \sum_{t=1}^n   \eta(\p_t(y),y_t) \v_t(y)  - c\v_t(y)^2 \right] &= \En \inf_{\lambda >0}\frac{1}{\lambda}\log\left( \sum_{\v \in V}\exp\left( \lambda \sum_{t=1}^n   \eta(\p_t(y),y_t) \v_t(y)  - c\v_t(y)^2\right) \right) \notag\\
&\le \inf_{\lambda > 0} \frac{1}{\lambda} \log\left( \sum_{\v \in V}\En \prod_{t=1}^n  \exp\left( \lambda \left(\eta(\p_t(y),y_t) \v_t(y)  - c\v_t(y)^2\right)\right) \right).
\end{align}
Let $X$ be a zero-mean random variable taking on a value $-v/p$ with probability $p$ and $v/(1-p)$ with probability $(1-p)$, where $\delta< p < 1/2$ and $|v|\leq 1$.
From the fact that $(e^x - x -1)/x^2$ is a non-decreasing function and $|X|<1/\delta$ almost surely, it follows that
$$
e^{\lambda X} - \lambda X -1 \le  \delta^2 X^2\ \left(e^{\lambda/\delta } - \lambda/\delta -1\right).
$$
Taking expectation over $X$ and upper bounding the variance $\En X^2 \leq 2p(v/p)^2 \leq 2v^2/\delta$,
$$
\En e^{\lambda X} -1 \leq 2 v^2 \delta \left(e^{\lambda/\delta } - \lambda/\delta -1\right).
$$
Using $1+x \le e^x$,
$$
\En e^{\lambda X} \le  \exp\left\{ 2v^2 \delta \left(e^{\lambda/\delta } - \lambda/\delta -1\right) \right\}.
$$
Applying the above derivation,
\begin{align*}
	\En\left[ \exp\left( \lambda \left(\eta(\p_t(y),y_t) \v_t(y)  - c\v_t(y)^2\right)\right) ~\middle|~ y_1,\ldots,y_{t-1}\right] 
	&= \exp\left(- \lambda c \v_t(y)^2\right) \times \En\left[ \exp\left( \lambda \eta(\p_t(y),y_t) \v_t(y) \right) ~\middle|~ y_1,\ldots,y_{t-1}\right] \\
	&\leq \exp\left(- \lambda c \v_t(y)^2\right) \times \exp\left\{ 2\v_t(y)^2 \delta \left(e^{\lambda/\delta } - \lambda/\delta -1\right) \right\}\\
  & = \exp\left(2 \delta\ \v_t(y)^2 \left(  e^{\frac{\lambda}{\delta} } - 1 - \left(1 + \frac{c}{2}\right)\frac{ \lambda}{\delta} \right)\right).
\end{align*}
Choosing $\lambda = \log(1 + \frac{c}{2})  \delta$ we ensure that $\left(  e^{\frac{\lambda}{\delta} } - 1 - \frac{(2+c) \lambda}{2 \delta} \right) <0$ and 

$$\En\left[ \exp\left( \lambda \left(\eta(\p_t(y),y_t) \v_t(y)  - c\v_t(y)^2\right)\right) ~\middle|~ y_1,\ldots,y_{t-1}\right] \leq 1.
$$ 
Iterating the argument from $t=n$ down to $t=1$ in \eqref{eq:softmax}, we obtain
\begin{align*}
\En \sup_{\v \in V} \left[ \sum_{t=1}^n   \eta(\p_t(y),y_t) \v_t(y)  - c\v_t(y)^2\right] \le \frac{\log\ |V|}{\delta \log(1 + \frac{c}{2})}. 
\end{align*}
The case when $\p$ is $[0,1]$-valued, but the summation is taken only over $\{t: \p_t(y)\in[\delta,1-\delta]\}$, follows immediately through the same argument.
\end{proof}

\begin{proof}[\textbf{Proof of Lemma~\ref{eq:finite_biased_l2}}]
Both sides of the inequality in the statement of the Lemma are homogenous with respect to $v_{\text{max}}$, and so we can assume $v_{\text{max}}=1$ and rescale the problem. We have
\begin{align}
	\label{eq:moment_gen}
\En_y \max_{\v \in V} \left[ \sum_{t=1}^n   \eta(\p_t(y),y_t) \v_t(y) \right] &\leq \inf_{\lambda>0} \left\{\frac{1}{\lambda}\log \sum_{\v\in V} \En \exp\left(\lambda \sum_{t=1}^n   \eta(\p_t(y),y_t) \v_t(y) \right) \right\}  \notag\\
&\leq \inf_{\lambda>0} \left\{\frac{\log |V|}{\lambda} + \max_{\v\in V} \frac{1}{\lambda}\log  \En \exp\left(\lambda \sum_{t=1}^n   \eta(\p_t(y),y_t) \v_t(y) \right) \right\}.
\end{align}
As shown in the proof of Lemma~\ref{eq:finite_biased}, if $X$ is a zero-mean random variable taking on a value $-v/p$ with probability $p$ and $v/(1-p)$ with probability $(1-p)$, where $\delta< p < 1/2$ and $|v|\leq 1$, then
$$
\log \En e^{\lambda X} \le 2v^2 \delta \phi(\lambda/\delta)
$$
where $\phi(x)= e^x-x-1$. Hence,
\begin{align*}
	 \En \left[ \exp\left(\lambda \sum_{t=1}^n   \eta(\p_t(y),y_t) \v_t(y) \right) ~\middle| y_1,\ldots,y_{n-1}\right] &\leq \exp\left(\lambda \sum_{t=1}^{n-1} \eta(\p_t(y),y_t) \v_t(y) \right) \times \En \left[ \exp\left(\lambda \eta(\p_n(y),y_n) \v_n(y) \right) ~\middle| y_1,\ldots,y_{n-1}\right]\\
	 &\leq \exp\left(\lambda \sum_{t=1}^{n-1} \eta(\p_t(y),y_t) \v_t(y) \right) \times \exp\left\{ 2 \delta \phi(\lambda/\delta) \max_{y_{n-1}} \v_n(y)^2 \right\}
\end{align*}
For $y_{n-1}$, we proceed in a similar fashion:
\begin{align*}
	 &\En \left[ \exp\left(\lambda \sum_{t=1}^{n-1}  \eta(\p_t(y),y_t) \v_t(y) \right) \times \exp\left\{ 2 \delta \phi(\lambda/\delta) \max_{y_{n-1}} \v_n(y)^2 \right\}  ~\middle| y_1,\ldots,y_{n-2}\right] \\
	 &\leq \exp\left(\lambda \sum_{t=1}^{n-2}  \eta(\p_t(y),y_t) \v_t(y) \right) \times \En \left[ \exp\left\{ \lambda \eta(\p_{n-1}(y),y_{n-1}) \v_{n-1}(y) +  2 \delta \phi(\lambda/\delta) \max_{y_{n-1}} \v_n(y)^2 \right\}  ~\middle| y_1,\ldots,y_{n-2}\right] \\
	 &\leq \exp\left(\lambda \sum_{t=1}^{n-2}  \eta(\p_t(y),y_t) \v_t(y) \right) \times \exp\left\{ 2 \delta \phi(\lambda/\delta) \v_{n-1}(y)^2 +  2 \delta \phi(\lambda/\delta) \max_{y_{n-1}} \v_n(y)^2 \right\} \\
	 &\leq \exp\left(\lambda \sum_{t=1}^{n-2}  \eta(\p_t(y),y_t) \v_t(y) \right) \times \exp\left\{ 2 \delta \phi(\lambda/\delta)  \max_{y_{n-2}, y_{n-1}} \left\{  \v_{n-1}(y)^2 + \v_n(y)^2 \right\} \right\} 
\end{align*}
Unrolling the expression to $t=1$ we obtain 
$$\log \En \left[ \exp\left(\lambda \sum_{t=1}^n   \eta(\p_t(y),y_t) \v_t(y) \right) \right]\leq 2 \delta \phi(\lambda/\delta) n\bar{v}^2  $$
where $\bar{v}^2=\max_{\v\in V}\max_{y} \frac{1}{n}\sum_{t=1}^n \v_t(y)^2$. In view of \eqref{eq:moment_gen}, we get
\begin{align}
	\label{eq:choice_lambda}
\En_y \max_{\v \in V} \left[ \sum_{t=1}^n   \eta(\p_t(y),y_t) \v_t(y) \right] 
&\leq \inf_{\lambda>0} \left\{\frac{\log |V|}{\lambda} + \frac{2 \delta \phi(\lambda/\delta) n\bar{v}^2}{\lambda} \right\}.
\end{align}
First, consider the case $\delta\geq \frac{\log |V|}{4n\bar{v}^2}$. Then the choice
$\lambda = \frac{1}{2}\sqrt{\frac{\delta \log |V|}{n\bar{v}^2}}$
ensures $\lambda \leq \delta$. In this case, $\phi(\lambda/\delta)$ can be upper bounded by a quadratic $\phi(\lambda/\delta)\leq (\lambda/\delta)^2 \cdot e$. The upper bound in \eqref{eq:choice_lambda} becomes 
$$\frac{\log |V|}{\lambda} + \frac{2e (\lambda/\delta)^2 \delta  n\bar{v}^2}{\lambda} \leq (2+e)\sqrt{\frac{n\bar{v}^2\log |V|}{\delta}}.$$
On the other hand, if $\delta < \frac{\log |V|}{4n\bar{v}^2}$, the upper bound in \eqref{eq:choice_lambda} becomes
$$ \inf_{\lambda>0} \left\{\frac{\log |V|}{\lambda} + \frac{2 \log|V| \phi(\lambda/\delta) }{4\lambda} \right\}.$$
Choosing $\lambda=\delta$ yields an upper bound of
$\frac{2\log |V|}{\delta}.$ Combining the two cases, we arrive at the statement of the Lemma. 
\end{proof}

\begin{proof}[\textbf{Proof of Theorem~\ref{thm:biased_dudley}}]
	Let us use the shorthand $\cD = [\delta,1-\delta]$.  Let $V'$ be a sequential $\gamma$-cover of $\G$ on $\z$ in the $\ell_\infty$ sense, i.e.
	$$\forall y\in\{0,1\}^n,~~ \forall g\in\G,~~ \exists \v\in V' \mbox{~~~s.t.~~~} |g(\z_t(y))-\v_t(y)| \leq \gamma.$$
	Of course, an $\ell_\infty$ cover is also an $\ell_2$ cover at the same scale.
	Let us augment $V'$ to include the all-zero tree, and denote the resulting set by $V=V'\cup \{\boldsymbol{0}\}$. Denote by $\v[\epsilon,g]$ a $\gamma$-close tree promised above. We have
	\begin{align}
		\label{eq:dudley_initial_split}
		&\En \sup_{g\in\G}\left[ \sum_{t: {\p_t(y)\in\cD}} \eta(\p_t(y),y_t) g(\z_t(y))  - K g(\z_t(y))^2 \right] \\
		&=\En \sup_{g\in\G}\left[ \sum_{t: {\p_t(y)\in\cD}}  \eta(\p_t(y),y_t) \Big(g(\z_t(y))-\v[y,g]_t(y)\Big) - K\Big(g(\z_t(y))^2-\frac{1}{4}\v[y,g]_t(y)^2\Big) \right.\\
		&\left.\hspace{1in} +\Big(\eta(\p_t(y),y_t)\v[y,g]_t(y) - \frac{K}{4}\v[y,g]_t(y)^2  \Big)\right] \\
		&\leq \En \sup_{g\in\G}\left[ \sum_{t: {\p_t(y)\in\cD}}  \eta(\p_t(y),y_t) \Big(g(\z_t(y))-\v[y,g]_t(y)\Big) - K\Big(g(\z_t(y))^2-\frac{1}{4}\v[y,g]_t(y)^2\Big)\right]\\
		&+ \En \max_{\v\in V'}\left[\sum_{t: {\p_t(y)\in\cD}}  \eta(\p_t(y),y_t)\v_t(y) - \frac{K}{4}\v_t(y)^2 \right]
	\end{align}
	We now claim that for any $y$ and $g$ there exists an element $\v[y,g]\in V$ such that
	\begin{align}
		\label{eq:desired_l2_norm_relation}
		\sum_{t=1}^{n} g(\z_t(y))^2\geq \frac{1}{4}\sum_{t=1}^{n}\v[y,g]_t(y)^2
	\end{align}
	and so we can drop the corresponding negative term in the supremum over $\G$. First consider the easy case $\frac{1}{n}\sum_{t=1}^{n} g(\z_t(\epsilon))^2 \leq \gamma^2$. Then we may choose $\boldsymbol{0}\in V$ as a tree that provides a sequential $\gamma$-cover in the $\ell_2$ sense. Clearly, \eqref{eq:desired_l2_norm_relation} is then satisfied with this choice of $\v[\epsilon,g]=\boldsymbol{0}$. Now, assume $\frac{1}{n}\sum_{t=1}^{n} g(\z_t(\epsilon))^2 > \gamma^2$. Fix any tree $\v[\epsilon,g]\in V$ that is $\gamma$-close in the $\ell_2$ sense to $g$ on the path $\epsilon$. Denote $u=(\v[\epsilon,g]_1(\epsilon),\ldots, \v[\epsilon,g]_n(\epsilon))$ and $h=(g(\z_1(\epsilon)),\ldots, g(\z_n(\epsilon)))$. Thus, we have that $\|u-h\|\leq \gamma$ and $\|h\|\geq \gamma$ for the norm $\|h\|^2=\frac{1}{n}\sum_{t=1}^n h_t^2$.
	Then $$\|u\|\leq \|u-h\|+\|h\| \leq \gamma + \|h\| \leq 2 \|h\|$$
	and thus $\|h\|\geq \frac{1}{2}\|u\|$ as desired. We conclude that
	\begin{align}
		\label{eq:dudleysplit}
		&\En \sup_{g\in\G}\left[ \sum_{t: {\p_t(y)\in\cD}} \eta(\p_t(y),y_t) g(\z_t(y))  - K g(\z_t(y))^2 \right] \\
		&\leq \En \sup_{g\in\G}\left[ \sum_{t: {\p_t(y)\in\cD}}  \eta(\p_t(y),y_t) \Big(g(\z_t(y))-\v[y,g]_t(y)\Big) \right] + \En \max_{\v\in V'}\left[\sum_{t: {\p_t(y)\in\cD}}  \eta(\p_t(y),y_t)\v_t(y) - \frac{K}{4}\v_t(y)^2 \right]\notag
	\end{align}
	By Lemma~\ref{eq:finite_biased}, the second term is upper bounded by
	$$\frac{\log\ \cN_\infty(\G,\gamma,\z)}{\delta \log(1 + \frac{K}{8})} $$
	As for the second term,	we note that conditionally on $y_1,\ldots,y_{t-1}$, the random variable $\eta(\p_t(y),y_t)$ is zero-mean. Let us proceed with the chaining technique. To this end, let $\v[g,y]^j\in V^j$ be an element of a $\gamma 2^{-j}$-cover of $g\in\G$.
	\begin{align}
		&\En \sup_{g\in\G}\left[ \sum_{t: {\p_t(y)\in\cD}}  \eta(\p_t(y),y_t) \Big(g(\z_t(y))-\v[y,g]_t(y)\Big) \right] \\
		&\leq \sum_{j=1}^N \En \sup_{g\in\G}\left[ \sum_{t: {\p_t(y)\in\cD}}   \eta(\p_t(y),y_t) \Big(\v[y,g]^{j}_t(y)-\v[y,g]^{j-1}_t(y)\Big) \right] \\
		&+ \En \sup_{g\in\G}\left[ \sum_{t: {\p_t(y)\in\cD}}   \eta(\p_t(y),y_t) \Big(g(\z_t(y))-\v[y,g]^{N}_t(y)\Big) \right] 
	\end{align}
	For the last term we use the Cauchy-Schwartz inequality: for any $y$ and $g\in \G$,
	\begin{align}
		\sum_{t: {\p_t(y)\in\cD}}   \eta(\p_t(y),y_t) \Big(g(\z_t(y))-\v[y,g]^{N}_t(y)\Big) &\leq \left( \sum_{t: {\p_t(y)\in\cD}}   \eta(\p_t(y),y_t)^2 \right)^{1/2} \left(\sum_{t: {\p_t(y)\in\cD}}  \Big(g(\z_t(y))-\v[y,g]^{N}_t(y)\Big)^2\right)^{1/2}\\
		&\leq \frac{1}{\delta} n \gamma 2^{-N}
	\end{align}
	Further, for any $j=1,\ldots,N$,
	\begin{align*}
		\En \sup_{g\in\G}\left[ \sum_{t: {\p_t(y)\in\cD}}   \eta(\p_t(y),y_t) \Big(\v[y,g]^{j}_t(y)-\v[y,g]^{j-1}_t(y)\Big) \right] &\leq \En \max_{\w\in W^j} \left[ \sum_{t: {\p_t(y)\in\cD}} \eta(\p_t(y),y_t) \w_t(y) \right]
	\end{align*}
	where $W^j$ is defined as the set of difference trees, defined as follows. For each pair $\v'\in V^j, \v''\in V^{j-1}$, let $\w$ be defined for each path $(y_1,\ldots,y_n)\in\{0,1\}^n$ and $t\in\{1,\ldots,n\}$ as
	$$\w_t(y) = \begin{cases} \v'_t(y)-\v''_t(y), ~~~~~\text{if exists } (y'_t,\ldots,y'_n) ~~\text{s.t. }~~ \exists g\in\G ~~~\text{s.t.}~~~ \v'=\v[g,\bar{y}]^j, \v''=\v[g,\bar{y}]^{j-1}, \bar{y} = (y_1,\ldots,y_{t-1},y'_t,\ldots,y'_n)\\  0 ~~~~\text{otherwise}\end{cases}.$$
	In other words, $\w$ is defined for each element of the tree as the difference between two trees if there is continuation of the path on which the two trees are indeed covering elements for some $g\in\G$, and $0$ if no such continuation exists. Then $W^j$ is defined as the collection of all such trees $\w$ obtained by pairing up all choices of trees from $V^j$ and $V^{j-1}$. Clearly, the size $|W^j|\leq |V^j|\times |V^{j-1}| \leq |V^j|^2$.
		
	We now use the result of Lemma~\ref{eq:finite_biased_l2}:
	\begin{align}
		\En \max_{\w\in W^j} \left[ \sum_{t: {\p_t(y)\in\cD}} \eta(\p_t(y),y_t) \w_t(y) \right] \leq 5\bar{v}\sqrt{\frac{\log |W^j|}{\delta}}+ \frac{2v_{\text{max}}\log|W^j|}{\delta}.
	\end{align}
	with $\bar{v} = \max_{\w,y} (\sum_{t=1}^n \w_t(y)^2)^{1/2}$ and $v_{\text{max}}=\max_{\w,y}|\w_t(y)|$. We over-bound $\bar{v}$ by $v_{\text{max}}$ in the arguments below. By construction of each $\w\in W^j$, the $\ell_2$ norm along any path is upper bounded by $3\sqrt{n}\gamma 2^{-j}$ (see \cite{RakSriTew14ptrf}). We conclude that	 
	 $$\En \max_{\w\in W^j} \left[ \sum_{t: {\p_t(y)\in\cD}} \eta(\p_t(y),y_t) \w_t(y) \right]\leq 15\sqrt{\frac{2n}{\delta}} (\gamma 2^{-j})\sqrt{\log|V^j|} + \frac{4(\gamma 2^{-j})\log|V^j|}{\delta}.$$
	 Observe that
	 \begin{align}
		 \sum_{j=1}^N \gamma 2^{-j}\sqrt{\log|V^j|} &= 2 \sum_{j=1}^N (\gamma 2^{-j}-\gamma 2^{-(j+1)}) \sqrt{\log\cN_\infty(\G, \gamma 2^{-j}, \z)} \\
		 &\leq 2 \int_{\gamma 2^{-(N+1)}}^{\gamma} \sqrt{\log\cN_\infty(\G, \rho, \z)}d\rho
	\end{align}
	and similarly
	 \begin{align}
		 \sum_{j=1}^N \gamma 2^{-j}\log|V^j| &\leq 2 \int_{\gamma 2^{-(N+1)}}^{\gamma} \log\cN_\infty(\G, \rho, \z)d\rho.
	\end{align}	
	Fix $\alpha \in (0,\gamma)$ and let $N=\max\{j: \gamma 2^{-j} > 2\alpha\}$. Then 
	$\gamma 2^{-(N+1)} \leq 2\alpha$ and $\gamma 2^{-N}\leq 4\alpha$. 
    Combining all the bounds,
 	\begin{align}
 		&\En \sup_{g\in\G}\left[ \sum_{t: {\p_t(y)\in\cD}}  \eta(\p_t(y),y_t) \Big(g(\z_t(y))-\v[y,g]_t(y)\Big) \right] \\
		&\leq \inf_{\alpha(0,\gamma]}\left\{ \frac{4 n \alpha}{\delta} + 30\sqrt{\frac{2n}{\delta}}\int_{\alpha}^\gamma \sqrt{\log\cN_\infty(\G,\rho,\z)}d\rho + \frac{8}{\delta}\int_{\alpha}^\gamma \log\cN_\infty(\G,\rho,\z)d\rho \right\}
	 \end{align}
	 The statement of the theorem follows by combining the two upper bounds for \eqref{eq:dudleysplit}.
\end{proof}

\begin{proof}[\textbf{Proof of Theorem~\ref{thm:covering_estimate}}]
	The proof closely follows the one in \cite[Thm. 4]{RakSriTew14ptrf}, and we refer to that paper for the missing details. Define the function $g_k(d,n) = \sum_{i=0}^d {n\choose i}k^i$ for $n\geq 1$ and $d\geq 0$, and note the recursion
	$$g_k(d,n-1)+kg_k(d-1,n-1) = g_k(d,n).$$
	We proceed by induction on $(n,d)$. The base of the induction is the same as in the proof of \cite[Thm. 4]{RakSriTew14ptrf}. For the induction step, fix an \emph{ordered} $\X$-valued tree $\x$ of depth $n$ and suppose $\fat_2^o(\F,\X)=d$. Define the partition $\F=\cup_{i=1}^k \F_i$ according to $\F_i=\{f: f(\x_1)=i\}$. For the sake of contradiction, suppose $\fat_2^o(\F_i,\img(\x)) = \fat_2^o(\F_j,\img(\x))=d$ for some $j-i\geq 2$. Then there exists two $\img(\x)$-valued ordered trees $\w$ and $\v$ of depth $d$ that are 2-shattered by $\F_i$ and $\F_j$, respectively. Crucially, $\x_1$ cannot appear in either of these trees (that is, $\x_1\notin\img(\w)\cup\img(\v)$) because functions in $\F_i$ (resp., $\F_j$) are constant on $\x_1$. Furthermore, $\x_1\preceq a$ for any $a\in\img(\w)\cup\img(\v)$. Hence, by joining $\w$ and $\v$ with $\x_1$ at the root, we obtain an ordered tree which is now $2$-shattered. The witness of this shattering is constructed by joining the two witnesses (for $\w$ and $\v$) and $(i+j)/2$ at the root. This leads to a contradiction. The rest of the proof follows exactly as in \cite[Thm. 4]{RakSriTew14ptrf}.
\end{proof}

\begin{proof}[\textbf{Proof of Lemma~\ref{eq:bdd_local_norm}}]
The gradient of $g_t$ at $w_t$ is
$$\nabla g_t(w_t) =  -\ind{y_t=1} \frac{x_t}{1+\inner{w_t,x_t}} + \ind{y_t=0} \frac{x_t}{1-\inner{w_t,x_t}} $$
and the Hessian of the barrier as
$$\nabla^2 R(w_t) = \frac{2}{1-\|w_t\|^2}I + \frac{4}{(1-\|w_t\|^2)^2 }w_t w_t^\tr.$$
By rotational invariance, for the following calculation we may assume without loss of generality that $w_t=a\e_1$ is in the direction of the basis vector $\e_1$ and $a>0$. We can then write the inverse (see \cite{AbeRak09colt}) as
$$\nabla^2 R(w_t)^{-1} = \frac{1}{2}(1-a^2)(I-\e_1\e_1^\tr) + \frac{(1-a^2)^2}{2(1-a^2)+ 4}\e_1\e_1^\tr\preceq (1-a)(I-\e_1\e_1^\tr) + \frac{2}{3}(1-a)^2\e_1\e_1^\tr.$$
Consider the case $y_t=0$ (the analysis for $y_t=1$ follows the same lines). Let us write $x_t = b\e_1 + \y$ with $\inner{\y,\e_1}=0$ and $\|y\|^2\leq 1-b^2$. We have
$$\nabla g_t(w_t) = \frac{x_t}{1-\inner{w_t,x_t}} = \frac{b\e_1 + \y}{1-ab}.$$
and
$$\nabla g_t(w_t)^\tr \nabla^2 R(w_t)^{-1} \nabla g_t(w_t) \leq \frac{b^2}{(1-ab)^2}\cdot (1-a)^2 + \frac{1-b^2}{(1-ab)^2}\cdot (1-a)$$
If $b\leq 0$, the above expression is upper bounded by $2$, and for $b>0$, the expression is upper bounded by $3$ (we did not optimize the constants). 
\end{proof}

\begin{proof}[\textbf{Proof of Lemma~\ref{lem:nesterov}}]
	We reproduce the proof from \cite{nesterov2011barrier} for completeness. Let $x\in\text{int}~ \K$ and $r\in[0,1)$. Let
	$$y = x - \frac{r}{\|\nabla\psi(x)\|_x^*} [\nabla^2 F(x)]^{-1}\nabla \psi(x).$$
	Then $y\in\text{int} \K$ because the Dikin ellipsoid is contained in the set. Hence,
	$$0\leq \psi(y)\leq \psi(x)+\inner{\nabla\psi(x),y-x} = \psi(x) - r\|\nabla\psi(x)\|_x^*.$$
	Statement follows because $r$ is arbitrary in $[0,1)$.
\end{proof}

\section*{Acknowledgements}
We gratefully acknowledge the support of NSF under grants CAREER DMS-0954737 and CCF-1116928, as well as Dean's Research Fund.

\bibliographystyle{plain}
\bibliography{paper}

\begin{thebibliography}{10}

\bibitem{AbeAgaBarRak09}
J.~Abernethy, A.~Agarwal, P.~Bartlett, and A.~Rakhlin.
\newblock A stochastic view of optimal regret through minimax duality.
\newblock In {\em Proceedings of the 22nd Annual Conference on Learning
  Theory}, 2009.

\bibitem{AbeRak09colt}
J.~Abernethy and A.~Rakhlin.
\newblock Beating the adaptive bandit with high probability.
\newblock In {\em COLT}, 2009.

\bibitem{cesa1999minimax}
N.~Cesa-Bianchi and G.~Lugosi.
\newblock Minimax regret under log loss for general classes of experts.
\newblock In {\em Proceedings of the Twelfth annual conference on computational
  learning theory}, pages 12--18. ACM, 1999.

\bibitem{PLG}
N.~Cesa-Bianchi and G.~Lugosi.
\newblock {\em Prediction, Learning, and Games}.
\newblock Cambridge University Press, 2006.

\bibitem{freund1996predicting}
Y.~Freund.
\newblock Predicting a binary sequence almost as well as the optimal biased
  coin.
\newblock In {\em Proceedings of the ninth annual conference on Computational
  learning theory}, pages 89--98. ACM, 1996.

\bibitem{hazan2007logarithmic}
E.~Hazan, A.~Agarwal, and S~Kale.
\newblock Logarithmic regret algorithms for online convex optimization.
\newblock {\em Machine Learning}, 69(2-3):169--192, 2007.

\bibitem{mf-up-98}
N.~Merhav and M.~Feder.
\newblock Universal prediction.
\newblock {\em IEEE Transactions on Information Theory}, 44:2124--2147, 1998.

\bibitem{nesterov2011barrier}
Y.~Nesterov.
\newblock Barrier subgradient method.
\newblock {\em Mathematical programming}, 127(1):31--56, 2011.

\bibitem{rakhlin2012relax}
A.~Rakhlin, O.~Shamir, and K.~Sridharan.
\newblock Relax and randomize: From value to algorithms.
\newblock In {\em Advances in Neural Information Processing Systems 25}, pages
  2150--2158, 2012.

\bibitem{StatNotes2012}
A.~Rakhlin and K.~Sridharan.
\newblock Statistical learning and sequential prediction, 2012.
\newblock Available at {\small
  \url{http://stat.wharton.upenn.edu/~rakhlin/courses/stat928/stat928_notes.pdf}}.

\bibitem{RakSri14a}
A.~Rakhlin and K.~Sridharan.
\newblock Online nonparametric regression.
\newblock In {\em Conference on Learning Theory}, 2014.

\bibitem{RakSri15nonparametric}
A.~Rakhlin and K.~Sridharan.
\newblock Online nonparametric regression with general loss functions, 2015.
\newblock Available at \url{http://arxiv.org/abs/1501.06598}.

\bibitem{RakSriTew10}
A.~Rakhlin, K.~Sridharan, and A.~Tewari.
\newblock Online learning: Random averages, combinatorial parameters, and
  learnability.
\newblock {\em Advances in Neural Information Processing Systems 23}, pages
  1984--1992, 2010.

\bibitem{RakSriTew14ptrf}
A.~Rakhlin, K.~Sridharan, and A.~Tewari.
\newblock Sequential complexities and uniform martingale laws of large numbers.
\newblock {\em Probability Theory and Related Fields}, February 2014.

\bibitem{rissanen1986complexity}
J.~Rissanen.
\newblock Complexity of strings in the class of markov sources.
\newblock {\em Information Theory, IEEE Transactions on}, 32(4):526--532, 1986.

\bibitem{rissanen1996fisher}
J.~Rissanen.
\newblock Fisher information and stochastic complexity.
\newblock {\em Information Theory, IEEE Transactions on}, 42(1):40--47, 1996.

\bibitem{shtar1987universal}
Y.~M. Shtarkov.
\newblock Universal sequential coding of single messages.
\newblock {\em Problemy Peredachi Informatsii}, 23(3):3--17, 1987.

\bibitem{xie2000asymptotic}
Q.~Xie and A.R. Barron.
\newblock Asymptotic minimax regret for data compression, gambling, and
  prediction.
\newblock {\em Information Theory, IEEE Transactions on}, 46(2):431--445, 2000.

\end{thebibliography}

\end{document}